\newtheorem{definition}{Definition}
\newtheorem{proposition}{Proposition}
\newtheorem{corollary}{Corollary}
\newtheorem{theorem}{Theorem}
\newtheorem{lemma}{Lemma}
\newcommand{\footnoteremember}[2]{
\footnote{#2}
\newcounter{#1}
\setcounter{#1}{\value{footnote}}
}
\newcommand{\footnoterecall}[1]{
\footnotemark[\value{#1}]
}
\begin{document}

%\frenchspacing
%%%\pdfinfo{
%%%/Title (Formatting Instructions for Authors Using LaTeX)
%%%/Subject (AAAI Publications)
%%%/Author (AAAI Press)}
%%%\setcounter{secnumdepth}{0}
\title{Equilibria of Chinese Auctions}
\author{
Simina Br\^{a}nzei
\footnoteremember{AU}{Department of Computer Science, Aarhus University, Denmark. Email: \{simina,bromille\}@cs.au.dk}
\and
Clara Forero
\footnoteremember{UW}{School of Computer Science, University of Waterloo, Canada. Email: \{ciforero, klarson\}@uwaterloo.ca}
\and
Kate Larson
\footnoterecall{UW}
\and
Peter Bro Miltersen
%\footnote{Aarhus University}
\footnoterecall{AU}
}
\date{}

%\author{Simina Br\^{a}nzei, Clara Forero, Kate Larson, and Peter Bro Miltersen}

\maketitle

%\pagestyle{plain}

%\maketitle
% The file aaai.sty is the style file for AAAI Press
% proceedings, working notes, and technical reports.
%
%\numberofauthors{2}

% You can go ahead and credit any number of authors here,
% e.g. one 'row of three' or two rows (consisting of one row of three
% and a second row of one, two or three).
%
% The command \alignauthor (no curly braces needed) should
% precede each author name, affiliation/snail-mail address and
% e-mail address. Additionally, tag each line of
% affiliation/address with \affaddr, and tag the
% e-mail address with \email.
% 1st. author
%}
%\begin{comment}
%\author{
%Simina Br\^{a}nzei\\
%Aarhus University, Denmark\\
%Association for the Advancement of Artificial Intelligence\\
%445 Burgess Drive\\
%Menlo Park, California 94025\\
%simina@cs.au.dk\\
%\And
%Clara Forero\\
%\And
%Peter Bro Miltersen\\
%Aarhus University, Denmark\\
%bromille@cs.au.dk\\
%}
%\end{comment}

\begin{abstract}
Chinese auctions are a combination between a raffle and an auction and are held in practice at charity events or festivals. In a Chinese auction, multiple players compete for several items by buying tickets, which can be used to win the items. In front of each item there is a basket, and the players can bid by placing tickets in the basket(s) corresponding to the item(s) they are trying to win. After all the players have placed their tickets, a ticket is drawn at random from each basket and the item is given to the owner of the winning ticket. While a player is never guaranteed to win an item, they can improve their chances of getting it by increasing the number of tickets for that item.
  In this paper we investigate the existence of pure Nash equilibria in both the continuous and discrete settings. When the players have continuous budgets, we show that a pure Nash equilibrium may not exist for asymmetric games when some valuations are zero. In that case we prove that the auctioneer can stabilize the game by placing his own ticket in each basket. On the other hand, when all the valuations are strictly positive, a pure Nash equilibrium is guaranteed to exist, and the equilibrium strategies are symmetric when both valuations and budgets are symmetric. We also study Chinese auctions with discrete budgets, for which we give both existence results and counterexamples. While the literature on rent-seeking contests traditionally focuses on continuous costly tickets, the discrete variant is very natural and more closely models the version of the auction held in practice.
\end{abstract}

%\maketitle
\section{Introduction}
Chinese auctions are a combination between a raffle and an auction and are held in practice at charity events or festivals \cite{wiki}.
In a Chinese auction, multiple players compete for several items by buying tickets, which can be used to win the items.
In front of each item there is a basket, and the players can bid by placing tickets in the basket(s) corresponding to the item(s) they are trying to win.
After all the players have placed their tickets, a ticket is drawn at random from each basket and the item is given to the owner of the winning ticket.
While a player is never guaranteed to win an item, they can improve their chances of getting it by increasing the number
of tickets for that item.

Chinese auctions are related to the rent seeking contest introduced by Tullock in 1980 \cite{Tullock80}. 
In the Tullock contest, two players compete for winning a prize. The probability that a player wins the prize 
is a function of both players' effort. A player is never guaranteed to receive the item,
but can increase his chances of winning
it by increasing his effort. There exist two main types of rent seeking contests, namely perfectly discriminating
and imperfectly discriminating. In a perfectly discriminating contest, having the highest amount of
effort secures a win. Perfectly discriminating contests are a generalization of all-pay auctions and have been
studied, for example, in Moldovanu and Sela \cite{Moldovanu01}. The authors analyze the optimal allocation of multiple
prizes with symmetric players and prove the existence of symmetric bidding equilibria for contestants with
linear, convex, and concave cost functions. In an imperfectly discriminating contest a player is never guaranteed
to get an item, unless he is the only one exerting effort to obtain it. Both Chinese auctions and Tullock's
original model are imperfectly discriminating contests. However, Chinese auctions are a generalization of
the Tullock contest when the exponent is R = 1 and multiple players compete for multiple prizes, where the
players have asymmetric valuations. We study auctions with both costly and given budgets - the latter variant
is traditionally not analyzed in the rent seeking literature, but is a natural model related to several classes of
games such as threshold task games \cite{Chalkiadakis10} or coalitional skill games \cite{Bachrach03}. In addition, the items that no player
placed a bid on are kept by an auctioneer, while in rent seeking literature it is commonly assumed that such
items are assigned to some player at random.

Gradstein and Nitzan \cite{Gradstein89} study a generalization of the rent seeking contest where the players 
are identical and the number of items is restricted in a certain range, characterize the pure Nash equilibria of the game, and give conditions
for the existence of mixed Nash equilibria when $n$ is large.
Nitzan \cite{Nitzan94} surveys rent seeking contests and describes settings where multiple players compete for rent under different 
assumptions regarding the number of players, their risk attitudes, the source and nature of the rent.
Chowdhury and Sheremeta \cite{Chowdhury11} study another generalization of the Tullock contest, in which two players compete for
two prizes. The probabilities of winning are defined as in the Tullock model, but the payoff of each player (contingent upon winning or losing)
is a linear function of the prizes, own effort, and effort of the rival.
Nti \cite{Nti99} studies the Tullock contest with asymmetric valuations of the two players for the prize and variable ranges of the return to
scale parameter, and establishes a necessary and sufficient condition for the existence of a unique pure Nash equilibrium.
Hillman and Riley \cite{Hillman89} study a lottery model with multiple players and one prize and prove the existence of an equilibrium
for both discriminating and indiscriminating contests, symmetric and asymmetric valuations. Fang \cite{Fang02} shows that the equilibrium identified
by Hillman and Riley is in fact unique.
Siegel \cite{Siegel09} studies perfectly discriminating contests with multiple players and multiple identical prizes, provides a closed form
solution for the equilibrium payoffs, and analyzes player participation.

The model closest to ours in a published paper is by Palma and Munshi \cite{Palma12}, where multiple 
players compete for multiple prizes in an imperfectly discriminating contest.
The paper focuses on defining a 'holistic' probability model, in
which the effort of the players are mapped to the aggregate probability of a possible outcome. From the model,
they derive the probability of a player being successful, and show that when the costs of effort are symmetric
among players, then a symmetric Nash equilibrium exists. 
Matros \cite{Matros07} considers the exact same model of Chinese auctions as us and states the existence
of a symmetric pure Nash equilibrium when the valuations are symmetric and the existence of a pure Nash equilibrium 
with asymmetric valuations, for both costly and given tickets.
However, we are not aware of the existence of any full paper with proofs of the stated results.
On the contrary to the abstract, we show that a pure Nash equilibrium may not exist for asymmetric games when some valuations 
are zero.
In that case we prove that the auctioneer can stabilize the game by placing his own
ticket in each basket. On the other hand, when all the valuations are strictly positive, a pure Nash equilibrium
is guaranteed to exist, and the equilibrium strategies are symmetric when both valuations and budgets are
symmetric. We also study Chinese auctions with discrete budgets, for which we give both existence results
and counterexamples. While the literature on rent-seeking contests traditionally focuses on continuous costly
tickets, the discrete variant is very natural and more closely models the version of the auction held in practice.

\section{The Model}
Let $N = \{1, \ldots, n\}$ be a set of players and $M = \{1, \ldots, m\}$ a set of prizes.
Each player has several lottery tickets which are chances to win the items.
The players bid by placing tickets in a basket in front of each item they are trying to win.
After all the players distribute their tickets, a lottery is held at each item.
One ticket is drawn at random from the basket of the item, and the item is given to the owner 
of that ticket.
The items that no player placed a bid on are kept by the auctioneer.

Formally, for each player $i$, let $w_i$ be the total weight of $i$'s tickets. For each item
$j$, let $v_{i,j}$ denote the valuation of player $i$ for item $j$.
We study several types of budgets:
\begin{itemize}
\item \textit{Discrete budget}: Each player has several indivisible tickets of weights 
$T_i = \{t_{i,1}, \ldots, t_{i,n_i}\}$, where $\sum_{j = 1}^{n_i} t_{i,j} = w_i$.
The player can distribute the tickets as he wishes across the bins.
\item \textit{Continuous budget}: Each player $i$ has a budget $w_i$. The player can distribute $w_i$ 
arbitrarily across the items. If $w_{i,j}$ is the weight placed by player $i$ on each item $j$, it 
must be the case that $\sum_{j=1}^{m} w_{i,j} = w_i$ and $w_{i,j} \geq 0$, $\forall j \in M$.
\end{itemize}

\begin{comment}
\item \textit{Indivisible budget}: Each player $i$ has an indivisible ticket of weight $w_i \in \mathbb{R}^{+}$. 
The player can place the ticket in any bin of his choice.
\item \textit{Coin-based budget}: Each player has several indivisible tickets of weights $T_i = \{t_{i,1}, \ldots, t_{i,n_i}\}$,
where $\sum_{j = 1}^{n_i} t_{i,j} = w_i$.
The player can distribute them as he wishes across the bins.
\end{comment}

We first study the setting where the players are endowed with the tickets and pay no cost for obtaining them.
However, the budget of each player $i$ is limited and possibly different from that of the other players.
\begin{definition} \label{def:utility}
Given assignment $x = \left( x_{i,j} \right)_{i \in N, j \in M}$ of tickets to items, where $x_{i,j}$ is the 
weight of the tickets placed by player $i$ on item $j$, the expected utility of player $i$ is:
\begin{equation} \label{def:utility_given}
u_i(x) = \sum_{j = 1}^{m} \sigma_{i,j}(x) v_{i,j}
\end{equation}
where
\[
\sigma_{i,j}(x) =
\left\{
  \begin{array}{ll}
    \frac{x_{i,j}}{\sum_{k=1}^{n} x_{k,j}} & \mbox{if} \sum_{k=1}^{n} x_{k,j} > 0\\
    0 & \mbox{otherwise}
  \end{array}
\right.
\]
\end{definition}

An assignment $x$ of tickets to items is a pure Nash equilibrium if for every player $i \in N$ and any other assignment $y_i$ of tickets to items
by player $i$, the following holds: $u_i(y_i, x_{-i}) \leq u_i(x)$.

The rest of the paper is organized as follows. In Section \ref{sec:given} we study the model introduced in Definition \ref{def:utility},
for both continuous and discrete budgets. In the case of continuous budgets (Section \ref{sec:given_costly}), we show that a pure Nash equilibrium is guaranteed
to exist when all the valuations are strictly positive, and provide a closed form solution of equilibrium strategies for symmetric valuations. 
When the valuations can be zero, then a pure Nash equilibrium may not exist. In that case, the auctioneer can ensure existence
by placing his own ticket in each basket, such that no player gets the item if the auctioneer ticket is drawn.
For discrete budgets (Section \ref{sec:costly}, we give both existence results and counterexamples, depending on whether the players
have more than one ticket.
Finally, in Section \ref{sec:costly_continuous}, we study costly continuous budgets. With costly tickets,
a pure Nash equilibrium is guaranteed to exist when the valuations are strictly positive. When the valuations can be zero, then an equilibrium
may fail to exist, and similarly to the continuous given tickets scenario, the auctioneer can help by placing his own ticket in each basket.

%%%%%%%%%%%%%%%%%%%%%%%%%%%%%%%%%%%%%%%%%%% Format of the paper %%%%%%%%%%%%%%%%%%%%%%%%%%%%%%%%%%%%%%%%%

\section{Given Tickets} \label{sec:given}
In this section we study the game where the players are endowed with a budget of tickets, which they can use to maximize their chances
to win the items.

\subsection{Continuous Budgets} \label{sec:given_costly}
We use the following lemma.
\begin{lemma} \label{lem:convex}
Let $f : S^{m-1} \rightarrow \mathbb{R}$, where 
$S^{m-1} = \{ y \in \mathbb{R}^{m} | y_i \geq 0 \; \mbox{and} \; y_1 + \ldots + y_m = W\}$. Define 
% be such that for all $y = (y_1, \ldots, y_m) \in \Delta^{m-1}$, 
$f(y) = \sum_{j=1}^{m} \frac{b_j}{a_j + y_j}$, where $a_j, b_j > 0, \forall j \in \{1, \ldots, m\}$.
Then $f$ is strictly convex.
\end{lemma}
%%%%%%%%%%%%%%%%%%Proof in the appendix%%%%%%%%%%%%%%%%%%%
%\begin{proof}
%The domain of $f$ is convex, thus it is sufficient to verify that for all $y \neq z \in S^{m-1}$ and $\lambda \in (0,1)$, $f(\lambda y + (1 - \lambda) z) < \lambda f(y) + (1-\lambda) f(z)$.
%For each $j \in \{1, \ldots, m\}$, let $f_j: \mathbb{R}^{+} \leftarrow \mathbb{R}$, $f_j(x) = \frac{b_j}{a_j + y_j}$. Then $f_{j}^{''} = \frac{2 b_j}{(a_j + y_j)^{3}} > 0$, and $f_j$ is strictly convex. Thus
%$\frac{a_j \cdot b_j}{a_j + \lambda y_j + (1 - \lambda) z_j} \leq \lambda \left(\frac{a_j \cdot b_j}{a_j + y_j}\right) + (1 - \lambda) \left(\frac{a_j \cdot b_j}{a_j + z_j}\right)$ $(^*)$
%with equality if and only if $y_j = z_j$. By summing $(^*)$ over all $j$ and noting that at least one inequality is strict (since $y \neq z$), we obtain that $f$ is strictly convex.
%\end{proof}

\begin{theorem} \label{thm:symmetric_cont_val}
Chinese auctions with symmetric valuations and continuous budgets have a pure Nash equilibrium in which
all the players allocate the same percentage of their budget on a given item.
\end{theorem}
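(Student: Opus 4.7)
The plan is to exhibit an explicit symmetric-percentage profile and show it is a Nash equilibrium by combining the strict convexity from Lemma~\ref{lem:convex} with first-order conditions on the budget simplex.

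First I would write $v_j$ for the common valuation of item $j$ (symmetric valuations), set $V=\sum_{k=1}^m v_k$, and propose the candidate allocation
\[
x_{i,j} \;=\; \alpha_j\, w_i, \qquad \text{where } \alpha_j=v_j/V,
\]
for every player $i$ and item $j$. This is feasible because $\sum_j \alpha_j=1$ so $\sum_j x_{i,j}=w_i$, and the percentage $\alpha_j$ is the same across players. (If $V=0$ the theorem is vacuous since every profile gives utility zero; and for $n=1$ any allocation on items with $v_j>0$ works, so I may assume $n\ge 2$.)

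Next I would fix an arbitrary player $i$ and hold the other players' choices at the candidate, so the total of others on item $j$ is $A_j=\alpha_j(W-w_i)$ with $W=\sum_k w_k$. Rewriting $\frac{x_{i,j}}{x_{i,j}+A_j}=1-\frac{A_j}{x_{i,j}+A_j}$, player $i$'s expected utility becomes
\[
u_i(x_i,x_{-i}) \;=\; \sum_{j=1}^m v_j \;-\; \sum_{j=1}^m \frac{A_j\, v_j}{x_{i,j}+A_j}.
\]
By Lemma~\ref{lem:convex} (applied with $a_j=A_j$ and $b_j=A_j v_j$, dropping the items where $v_j=0$), the second sum is strictly convex in $x_i$ on the budget simplex, so $u_i$ is strictly concave and a point satisfying the KKT conditions is its unique global maximizer.

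I would then verify the KKT conditions at $x_{i,j}=\alpha_j w_i$. The stationarity condition reads $\frac{A_j v_j}{(x_{i,j}+A_j)^2}=\mu$ for every item with $x_{i,j}>0$. Plugging in the candidate gives $\frac{\alpha_j(W-w_i)v_j}{(\alpha_j W)^2}=\frac{(W-w_i)}{W^2}\cdot \frac{v_j}{\alpha_j}=\frac{(W-w_i)V}{W^2}$, which is independent of $j$. Hence the candidate is a best response for $i$, and by symmetry for every player, yielding a pure Nash equilibrium with the claimed structure.

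The main obstacle is the degenerate items with $v_j=0$: for these both $\alpha_j$ and $A_j$ are zero, so the stationarity formula above is undefined and the sum in Lemma~\ref{lem:convex} needs its terms with $b_j=0$ excluded. These items must be handled by a complementary-slackness argument showing that any unilateral deviation moving positive mass onto such an item strictly wastes budget (the marginal value at $x_{i,j}=0^+$ is $v_j=0<\mu$), so they stay at zero at every best response. Beyond that technicality, the argument is a routine concave-optimization verification.
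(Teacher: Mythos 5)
Your proof is correct and takes essentially the same route as the paper's: the same proportional candidate $x_{i,j}=(v_j/V)w_i$, the same rewriting of the utility as $\sum_j v_j$ minus a sum of terms of the form $b_j/(a_j+y_j)$, the same appeal to Lemma~\ref{lem:convex} for strict convexity, and the same KKT verification with the identical multiplier $(W-w_i)V/W^2$. The only difference is that you treat the items with $v_j=0$ explicitly via complementary slackness, whereas the paper removes them up front by a without-loss-of-generality remark; this is a harmless refinement, not a different argument.
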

\begin{proof}
Since the valuations are symmetric, we can assume without loss of generality that all the values are strictly positive.
We show that the allocation in which every player $i$ allocates amount $x_j = \left(\frac{v_j}{v_1 + \ldots + v_m}\right)w_i$
on item $j$ is a pure Nash equilibrium.

If all the other $n-1$ players allocate $x_j$ (as defined above) on every $s_j$,
the utility of player $i$ when using allocation $y= (y_1, \ldots, y_m)$ is:
\[
u_i(y, x_{-i}) = \sum_{j=1}^{m} \frac{y_j \cdot v_j}{\left( \sum_{k \neq i} \frac{(W - w_i)v_j}{v_1 + \ldots + v_m} \right) + y_j}
\]
where $y_j \geq 0$, $\sum_{j=1}^{m} y_j = w_i$, and $W = \sum_{l = 1}^{n} w_l$.
We claim that when the other players allocate $x_{-i}$, the best response of player $i$ is to allocate $x_i = (x_{i,1}, \ldots, x_{i,m})$.
Player $i$'s utility can be rewritten as:
\[
u_i(y, x_{-i}) = \sum_{j=1}^{m} v_j - \sum_{j=1}^{m} \frac{\left( \frac{(W - w_i)v_j^2}{v_1 + \ldots + v_m} \right)}{\left( \frac{(W - w_i)v_j}{v_1 + \ldots + v_m} \right) + y_j}
\]
Let $b_{i,j} = \frac{(W - w_i)v_j^2}{v_1 + \ldots + v_m}$ and $a_{i,j} = \frac{(W - w_i)v_j}{v_1 + \ldots + v_m}$, $\forall j \in M$. Define
$f_i(y) : S_{i}^{m-1} \rightarrow \mathbb{R}$, where $S_{i}^{m-1} = \{ y \in \mathbb{R}^{m} | y_j \geq 0, \forall j \in \{1, \ldots, m\} \; \mbox{and} \; y_1 + \ldots + y_m = w_i \}$
by $f_i(y) = \sum_{j=1}^{m} \frac{b_j}{a_j + y_j}$.
An allocation $y$ maximizes player $i$'s utility if and only if $f_i$ has a global minimum at $y$.
By Lemma \ref{lem:convex}, $f_i$ is strictly convex. Then $f_i$ has a unique global minimum, and moreover any 
local minimum is also a global minimum.

Let $g_{i,j}(y) = - y_j$, $\forall j \in \{1, \ldots, m\}$ and $h_i(y) = y_1 + \ldots + y_m - w_i$.
Finding the global minimum of $f_i$ is equivalent to solving the following optimization problem:
\begin{eqnarray*}
\mbox{min} & f_i(y) \\
\mbox{s.t.} \; & g_{i,j}(y) \leq 0, \forall j \in \{1, \ldots, m\} \\
& h_i(y) = 0
\end{eqnarray*}
Since $f_i$ and $g_{i,1}, \ldots, g_{i, m}$ are continuously differentiable convex functions and $h_i$ is an affine function, the $KKT$
conditions are both necessary and sufficient for a point $y$ to be a global minimum of $f_i$.
Let $\mu_{i,j} = 0, \forall j \in \{1, \ldots, m\}$ and $\lambda_i = \frac{(W - w_i)(v_1 + \ldots + v_m)}{W^2}$.
The KKT conditions at $x$ are:
\begin{enumerate}
\item $\nabla f_i(x) + \sum_{j=1}^{m} \mu_{i,j} \nabla g_{i,j}(x) + \lambda \nabla h_i(x) = 0$: That is,
$\begin{pmatrix} \frac{-b_{i,1}}{(a_{i,1} + x_{i,1})^2} \\ \ldots \\ \frac{-b_{i,m}}{(a_{i,m} + x_{i,m})^2} \end{pmatrix} + 
\begin{pmatrix} \lambda_i \\ \ldots \\ \lambda_i \end{pmatrix} = 0$, or $\lambda_i = \frac{b_{i,j}}{(a_{i,j} + x_{i,j})^2}$, $\forall j \in \{1, \ldots, m\}$,
which is immediate from the definitions of $a_{i,j}, b_{i,j}$, and $x_{i,j}$.
\item $g_{i,j}(x) \leq 0$, $\forall j \in \{1, \ldots, m\}$ and $h_i(x) = 0$: That is, $x_{i,j} \geq 0$, $\forall j \in \{1, \ldots, m\}$ 
and $x_{i,1} + \ldots + x_{i,m} = w_i$, which follows from the definition of $x$.
\item $\mu_{i,j} \geq 0$, $\forall j \in \{1, \ldots, m\}$: By definition, $\mu_{i,j} = 0$, $\forall j \in \{1, \ldots, m\}$.
\item $\mu_{i,j} g_{i,j}(x) = 0$, $\forall j \in \{1, \ldots, m\}$: Immediate since $\mu_{i,j} = 0$, $\forall j \in \{1, \ldots, m\}$.
\end{enumerate}
Thus the best response of player $i$ when all the other players allocate $x_{-i}$ is to also allocate according to $x$,
and so the game has a pure Nash equilibrium in which every player $i$ allocates $x_{i,j} = \left(\frac{v_j}{v_1 + \ldots + v_m}\right)w_i$ on every
item $j$.
\end{proof}

We obtain the following corollary when both the valuations and budgets are symmetric.

\begin{corollary}
Chinese auctions with symmetric valuations and symmetric continuous budgets have a symmetric pure Nash equilibrium.
\end{corollary}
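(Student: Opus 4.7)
The plan is to derive this corollary as an immediate specialization of Theorem \ref{thm:symmetric_cont_val}. That theorem already produces a pure Nash equilibrium for symmetric valuations (with no restriction on budgets) in which player $i$ places $x_{i,j}=\left(\tfrac{v_j}{v_1+\ldots+v_m}\right)w_i$ on item $j$. So the only remaining content of the corollary is observing that symmetry of budgets forces this equilibrium to also be symmetric across players.

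Concretely, I would proceed as follows. First, invoke Theorem \ref{thm:symmetric_cont_val} to obtain the equilibrium $x$ with $x_{i,j}=\left(\tfrac{v_j}{v_1+\ldots+v_m}\right)w_i$. Second, use the hypothesis that the budgets are symmetric, i.e.\ $w_1=w_2=\cdots=w_n=w$ for some common $w$, to conclude that $x_{i,j}=\left(\tfrac{v_j}{v_1+\ldots+v_m}\right)w$ is independent of $i$. Thus every player uses the same strategy $(x_1,\ldots,x_m)$ with $x_j=\left(\tfrac{v_j}{v_1+\ldots+v_m}\right)w$, which by definition is a symmetric strategy profile.

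There is essentially no obstacle here, since all the real work (verifying that the prescribed allocation satisfies the KKT conditions for each player's best-response optimization) has already been carried out in the proof of Theorem \ref{thm:symmetric_cont_val}. The only thing to emphasize is that symmetry of the equilibrium is inherited directly from the formula, once the budget parameter $w_i$ collapses to the common value $w$.
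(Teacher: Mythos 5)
Your proposal is correct and matches the paper's proof: both simply specialize Theorem \ref{thm:symmetric_cont_val} and observe that with a common budget $w$ the formula $x_{i,j}=\left(\tfrac{v_j}{v_1+\ldots+v_m}\right)w$ no longer depends on $i$ (the paper just normalizes $w=1$ first). Nothing further is needed.
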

\begin{proof}
Without loss of generality, we can assume that all the budgets are $1$.
By applying Theorem \ref{thm:symmetric_cont_val}, we obtain that the allocation $x_{i,j} = \frac{v_j}{v_1 + \ldots + v_m}$ for every player $i$ and item
$j$ is a pure Nash equilibrium.
\end{proof}

\begin{proposition}
Chinese auctions with asymmetric valuations and asymmetric continuous budgets do not necessarily have a pure Nash equilibrium when there exist zero valuations.
\end{proposition}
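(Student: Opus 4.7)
The plan is to exhibit a small explicit counterexample and show that no pure Nash equilibrium can exist. I would take $n=2$ players and $m=2$ items with valuations
\[
v_{1,1} = 1,\quad v_{1,2} = 0,\quad v_{2,1} = 1,\quad v_{2,2} = 1,
\]
and positive budgets $w_1, w_2 > 0$ (the particular values do not matter, and one can pick $w_1 \ne w_2$ to match the statement's asymmetry). The strategy is to argue that player 1's best response pins $x_{1,1}=w_1$, and that given this, player 2 has no best response at all.

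First I would analyze player 1. Since $v_{1,2}=0$, player 1's utility reduces to $u_1(x) = \frac{x_{1,1}}{x_{1,1}+x_{2,1}}\cdot v_{1,1}$, which is strictly increasing in $x_{1,1}$ (for any fixed $x_{2,1}$, by writing it as $1 - \frac{x_{2,1}}{x_{1,1}+x_{2,1}}$, assuming $x_{2,1}>0$; and equal to $1$ once $x_{1,1}>0$ if $x_{2,1}=0$). Hence player 1's unique best response to any strategy of player 2 places his entire budget on item 1, i.e.\ $x_{1,1}=w_1$, $x_{1,2}=0$. So in any PNE we must have $x_{1,1}=w_1$.

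Next I would analyze player 2's best response against $x_{1,1}=w_1, x_{1,2}=0$. Writing $y_1 = x_{2,1}$ and $y_2 = w_2 - y_1$, player 2's payoff is
\[
u_2(y) = \frac{y_1}{w_1+y_1} + \mathbf{1}\{y_2>0\}.
\]
The first term is strictly increasing and continuous in $y_1$, while the second term jumps from $0$ to $1$ as soon as $y_2>0$. Thus $u_2$ attains supremum $\frac{w_2}{w_1+w_2}+1$ as $y_2\downarrow 0$, but no feasible $y$ achieves it: any $y_2=0$ loses the bonus $1$, while any $y_2>0$ allows strict improvement by lowering $y_2$ slightly and increasing $y_1$. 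Hence player 2's best-response set is empty against this strategy of player 1.

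Combining the two steps yields the conclusion: any PNE must have $x_{1,1}=w_1$ by the first step, but then the second step shows player 2 has no best response, contradiction. The main subtlety is making sure that the strict monotonicity in the first step genuinely forces $x_{1,1}=w_1$ \emph{in equilibrium}; this is where one has to note that even though player 1's payoff does not depend on $x_{1,2}$, it is strictly monotone in $x_{1,1}$, so the budget constraint forces all weight onto item 1. Everything else is a clean discontinuity argument.
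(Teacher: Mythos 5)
There is a genuine gap in your Step 1, and it sits exactly at the point you flag as the ``main subtlety.'' The claim that player $1$'s \emph{unique} best response is $x_{1,1}=w_1$ against any strategy of player $2$ is false when $x_{2,1}=0$: in that case player $1$'s payoff equals $1$ for every $x_{1,1}>0$, so the best-response set is all of $\{x_{1,1}>0\}$ and nothing forces the corner. Consequently you have not ruled out equilibria with $x_{2,1}=0$, and your assertion that ``the particular values [of the budgets] do not matter'' is false. Concretely, take $w_1=10$, $w_2=1$ and the profile $x_1=(8,2)$, $x_2=(0,1)$. Player $1$ wins item $1$ with probability $1$, so $u_1=1$ is maximal. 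Player $2$'s payoff from placing weight $y\in[0,1]$ on item $1$ is $u_2(y)=\frac{y}{8+y}+\frac{1-y}{3-y}$, whose derivative $\frac{8}{(8+y)^2}-\frac{2}{(3-y)^2}$ equals $-7/72$ at $y=0$ and is decreasing in $y$, hence negative on all of $[0,1]$; so $y=0$ is player $2$'s best response. This profile is therefore a pure Nash equilibrium, and your game with these budgets is not a counterexample at all.

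The construction is salvageable, and the paper's proof is the same idea with the parameters pinned down: it takes $v_{1,1}=0$, $v_{1,2}=1$, $v_{2,1}=1$, $v_{2,2}=3$ with $w_1=w_2=1$, argues that the zero-valuation player must put everything on the item he values, and then runs your discontinuity (supremum-not-attained) argument on the other player. To repair your version, fix $w_1=w_2=1$ and add the missing case: if $x_{2,1}=0$ and $x_{1,1}>0$, the gain to player $2$ from shifting weight $\varepsilon$ onto item $1$ is $\varepsilon\left(\frac{1}{x_{1,1}}-\frac{1-x_{1,1}}{(2-x_{1,1})^2}\right)+o(\varepsilon)$, which is strictly positive because $\frac{1}{x_{1,1}}\geq 1$ while $\frac{1-x_{1,1}}{(2-x_{1,1})^2}\leq\frac{1}{4}$; and if $x_{1,1}=x_{2,1}=0$, player $2$ grabs item $1$ essentially for free. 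This eliminates all profiles with $x_{2,1}=0$; on the remaining profiles $x_{2,1}>0$, your strict-monotonicity argument does force $x_{1,1}=w_1$, and your Step 2, which is correct as written, finishes the proof.
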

\begin{proof}
Consider a two player game with two items with the following valuations and budgets: $v_{1,1} = 0, v_{1,2}=1, v_{2,1}=1, v_{2,2} = 3$
and $w_1 = w_2 = 1$. Assume by contradiction that the game has a pure Nash equilibrium, $x^* = (x_1^{*}, x_{2}^{*}) \in [0,1]$.
First note that $x_{1}^{*} = 0$, since otherwise player $1$ can improve his utility by allocating more weight on item $2$.
For all $\varepsilon > 0$, $x_{2}^{*} = 1 - \varepsilon$ cannot be a Nash equilibrium, since $u_2(1, 1 - \varepsilon) < u_2(1, 1 - \frac{\varepsilon}{2})$.
Thus the only remaining candidate for an equilibrium is $x^* = (1, 1)$. However, there exists $\varepsilon > 0$ such that $u_2(1, 1 - \varepsilon) > u_2(1,1)$.
Thus the game has no pure Nash equilibrium.
\end{proof}

%\textbf{TODO\#:} See what happens if the auctioneer puts a ticket of value $\Delta_j$ in each basket $s_j$, where $\Delta_j %<< w_i$ for all $i$. I think this should ensure existence of the equilibrium.
%\textbf{TODO\#:} Consider $\varepsilon$ equilibrium, I think this should exist.

When the players may have zero valuations, the auctioneer can guarantee the existence of a pure strategy equilibrium by 
placing a small ticket in each basket, such that if the auctioneer ticket is drawn, no player gets the item.
To prove this, we use the following theorem \cite{Reny06}.

\begin{theorem} (Debreu 1952; Glicksberg 1952; Fan 1952) \label{thm:Glicksberg}
Consider a strategic form game whose strategy spaces $S_i$ are nonempty compact convex subsets of an Euclidean space. If the payoff functions $u_i$ are continuous in $s$ and quasi-concave in $s_i$, then there exists a pure strategy Nash equilibrium.
\end{theorem}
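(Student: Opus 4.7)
The plan is to apply Kakutani's fixed point theorem to the best-response correspondence, whose fixed points are exactly the Nash equilibria of the game. First I would form the joint strategy space $S = \prod_{i} S_i$, which inherits nonemptiness, compactness, and convexity from its factors (and hence lives in a Euclidean space). For each player $i$ I would define the best-response correspondence $B_i : S_{-i} \rightrightarrows S_i$ by $B_i(s_{-i}) = \arg\max_{s_i \in S_i} u_i(s_i, s_{-i})$, and bundle these into a single correspondence $B : S \rightrightarrows S$ given by $B(s) = \prod_{i} B_i(s_{-i})$. A profile $s^{*} \in B(s^{*})$ is precisely a Nash equilibrium, so the whole argument reduces to verifying that $B$ satisfies the hypotheses of Kakutani.

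I would then check those hypotheses one at a time. Nonemptiness of $B_i(s_{-i})$ is the Weierstrass extreme value theorem applied to the continuous function $u_i(\cdot, s_{-i})$ on the compact set $S_i$. Convexity of $B_i(s_{-i})$ is where quasi-concavity in $s_i$ enters: the set of maximizers of a quasi-concave function on a convex set is itself convex, since for any two maximizers any convex combination has at least the same value. The more delicate hypothesis is that $B$ has a closed graph. For this I would invoke Berge's Maximum Theorem with the constant constraint correspondence $s_{-i} \mapsto S_i$; joint continuity of $u_i$ in $s$ is exactly the input Berge requires, yielding upper hemicontinuity of each $B_i$, and upper hemicontinuity is preserved under finite Cartesian products.

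With all four conditions of Kakutani verified on the compact convex set $S$, the existence of a fixed point $s^{*} \in B(s^{*})$ follows immediately, and this profile is a pure strategy Nash equilibrium. The main obstacle is the closed-graph step; if one prefers not to cite Berge as a black box, one can argue it by hand, taking sequences $(s^{k}, t^{k}) \to (s, t)$ with $t^{k} \in B(s^{k})$ and passing to the limit in the inequality $u_i(t^{k}_i, s^{k}_{-i}) \geq u_i(s'_i, s^{k}_{-i})$ for every deviation $s'_i \in S_i$, using joint continuity of $u_i$ to conclude $t \in B(s)$.
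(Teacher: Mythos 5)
The paper does not prove this statement at all: it is quoted as a classical result (Debreu--Glicksberg--Fan), attributed and cited to Reny's survey, and used purely as a black box in the proofs of Theorems \ref{thm:auctioneer_ticket} and \ref{thm:auctioneer_ticket_given}. So there is no in-paper argument to compare against. Your proposal is the standard textbook proof via Kakutani's fixed point theorem, and it is correct: the product strategy space is nonempty, compact, and convex; nonemptiness of each $B_i(s_{-i})$ follows from Weierstrass; convexity of the maximizer set follows from quasi-concavity in $s_i$ (the argmax set is the upper level set at the maximum value); and the closed-graph property follows either from Berge's maximum theorem with the constant constraint correspondence or from the direct sequential argument you sketch, which needs exactly the joint continuity of $u_i$ that the hypothesis supplies. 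A fixed point of the product best-response correspondence is by construction a Nash equilibrium. The only point worth being careful about is the precise form of Kakutani you invoke: the usual statement requires nonempty convex values together with a closed graph (equivalently, upper hemicontinuity with closed values into a compact codomain), and your verification covers exactly these, so the argument is complete.
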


\begin{comment}
We first introduce a lemma.

\begin{lemma} \label{lem:quasi_concave}
Let $f : S^{m-1} \rightarrow \mathbb{R}$, where 
$S^{m-1} = \{ y \in \mathbb{R}^{m} | y_i \geq 0 \; \mbox{and} \; y_1 + \ldots + y_m = W\}$.
For each $j \in \{1, \ldots, m\}$, let $a_j, b_j \geq 0$ and define
$f_j : \mathbb{R}^{+} \rightarrow \mathbb{R}$, 
$f_j(x) =  \frac{b_j \cdot x}{a_j + x}$ if $x > 0$ and $f_j(x) = 0$ otherwise.
Let $f(y) = \sum_{j=1}^{m} f_j(y_j)$. Then $f$ is quasi-concave.
\end{lemma}
\end{comment}

%We first note that the utility function of each player $i$, $u_i(x)$, is strictly concave in their own strategy, $x_i$.

\begin{theorem} \label{thm:auctioneer_ticket}
Chinese auctions with asymmetric valuations and asymmetric continuous budgets have a pure Nash equilibrium when the auctioneer places a strictly positive ticket in each basket.
\end{theorem}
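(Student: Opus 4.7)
The plan is to verify the three hypotheses of Theorem \ref{thm:Glicksberg} (Debreu--Glicksberg--Fan) for the modified game. Once the auctioneer contributes a ticket of weight $c_j > 0$ to basket $j$, player $i$'s winning probability for item $j$ becomes
\[
\sigma_{i,j}(x) \;=\; \frac{x_{i,j}}{c_j + \sum_{k=1}^n x_{k,j}},
\]
and since the denominator is now bounded below by $c_j > 0$ uniformly in $x$, the expected utility $u_i(x) = \sum_{j=1}^m v_{i,j} \, \sigma_{i,j}(x)$ is a well-defined rational function with no singularities on the strategy space. This directly patches the failure of Definition \ref{def:utility}, where the drop from a positive ratio to $0$ at $x_{\cdot,j} = 0$ is exactly what broke existence in the previous proposition.

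Next I would dispatch the easy hypotheses. Each player's strategy set is the simplex $S_i = \{y \in \mathbb{R}^m : y_j \geq 0,\; \sum_j y_j = w_i\}$, which is nonempty, compact, and convex as a subset of Euclidean space. Continuity of $u_i$ in the full profile $x$ is immediate from the observation above, since every denominator $c_j + \sum_k x_{k,j}$ is a continuous function of $x$ that never vanishes on $\prod_i S_i$.

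The main work, and the one genuine obstacle, is quasi-concavity of $u_i$ in player $i$'s own strategy. I would show the stronger statement that $u_i$ is in fact concave in $y = x_i$ with $x_{-i}$ fixed. Writing $A_j = c_j + \sum_{k \neq i} x_{k,j} > 0$, the utility becomes
\[
u_i(y, x_{-i}) \;=\; \sum_{j=1}^m v_{i,j}\,\frac{y_j}{A_j + y_j} \;=\; \sum_{j=1}^m v_{i,j}\left(1 - \frac{A_j}{A_j + y_j}\right).
\]
For each $j$, the map $y_j \mapsto A_j/(A_j + y_j)$ is convex on $y_j \geq 0$ (its second derivative is $2A_j/(A_j+y_j)^3 > 0$), so $v_{i,j}\, y_j/(A_j + y_j)$ is concave in $y_j$ whenever $v_{i,j} \geq 0$. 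Because each summand depends on only one coordinate of $y$, the sum is concave in the full vector $y$, hence quasi-concave. This is the step that really required the auctioneer's ticket: without it, $A_j$ could be zero and the curvature argument collapses (indeed, the function $y_j/y_j$ has a discontinuity at $0$).

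With all three hypotheses verified, Theorem \ref{thm:Glicksberg} yields a pure Nash equilibrium, completing the proof. The only delicate point worth flagging in the write-up is that the conclusion of concavity in the whole vector $y$ follows from separability across coordinates, not from a multivariate Hessian computation; I would state this explicitly to avoid confusion with Lemma \ref{lem:convex}, which addressed the dual situation of showing strict convexity of a reciprocal sum.
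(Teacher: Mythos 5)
Your proposal is correct and follows essentially the same route as the paper: verify the hypotheses of the Debreu--Glicksberg--Fan theorem, using the auctioneer's strictly positive ticket to guarantee continuity and concavity of $u_i$ in $x_i$. The only difference is that you spell out the concavity argument (via the decomposition $y_j/(A_j+y_j) = 1 - A_j/(A_j+y_j)$ and separability across coordinates) where the paper merely asserts it, and you correctly claim concavity rather than the paper's ``strict'' concavity, which fails when some $v_{i,j}=0$.
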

\begin{proof}
Let $\Delta_j > 0$ be the ticket placed by the auctioneer in each basket. Let $x_{i,j}$ be the weight placed by player $i$ on item $j$, where $x_{i,j} \geq 0$ and $\sum_{j=1}^{m} x_{i,j} = w_i$ for all $i \in N$. The utility of player $i$ is:
\begin{equation} \label{eq:utility_delta_ticket}
u_i(x) = \sum_{j=1}^{m} \left( \frac{x_{i,j}}{\Delta_j + X_{j}^{-i} + x_{i,j}} \right) v_{i,j}
\end{equation}
where $X_{j}^{-i} = \sum_{k \neq i} x_{k,j}$ is the weight placed by all players except $i$ on item $j$.
%By Lemma \ref{lem:quasi_concave}, $u_i(x)$ is quasi-concave in $x_i$. 
It can be easily verified that the utility function of each player $i$, $u_i(x)$, is strictly concave in their own strategy, $x_i$.
The strategy spaces $S_i = \{y \in \mathbb{R}^{m} | y_j \geq 0, y_1 + \ldots + y_m = w_i\}$ are nonempty, compact, and convex. Moreover, $u_i(x)$ is continuous in $x$ since the denominator of each term in the sum of Equation (\ref{eq:utility_delta_ticket}) is strictly positive.
Thus the conditions of Theorem \ref{thm:Glicksberg} apply, and the game has a pure strategy Nash equilibrium when the auctioneer places a ticket in each basket.
\end{proof}

When all the valuations are strictly positive, a pure Nash equilibrium is guaranteed to exist. To prove this, we use the following result by Reny \cite{Reny06}
for discontinuous games. First, we define the \emph{better-reply secure} property of a game.

\begin{definition}
Player $i$ can \emph{secure} a payoff of $\alpha \in \mathbb{R}$ at $s \in S$ if there exists $\bar{s_i} \in S_i$, such that $u_i(\bar{s_i}, s_{-i}^{'}) \geq \alpha$ for all $s_{-i}^{'}$ close enough to $s_{-i}$.
\end{definition}

\begin{definition}
A game $G = (S_i, u_i)_{i=1}^{n}$ is \emph{better-reply secure} if whenever $(s^{*}, u^{*})$ is in the closure of the graph of its vector payoff function and $s^{*}$ is not a Nash equilibrium, then some player $i$ can secure a payoff strictly above $u_i^{*}$ at $s^{*}$.
\end{definition}

\begin{theorem}[Reny, 1999] \label{thm:Reny}
If each $S_i$ is a nonempty, compact, convex subset of a metric space, and each $u_i(s_1, \ldots, s_n)$ is quasi-concave in $s_i$, then the game $G = (S_i, u_i)_{i=1}^{n}$ has at least one pure Nash equilibrium if in addition $G$ is better-reply secure.
\end{theorem}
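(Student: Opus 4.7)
The plan is to prove Reny's theorem by contradiction: assume no pure Nash equilibrium exists and derive a contradiction via a fixed-point argument adapted to discontinuous payoffs. The obstacle that separates this result from the classical Debreu--Glicksberg--Fan theorem already invoked as Theorem~\ref{thm:Glicksberg} is that the best-response correspondence need not have a closed graph when the $u_i$ are merely discontinuous, so Kakutani's theorem cannot be applied directly; better-reply security is precisely the hypothesis that circumvents this failure.

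First, I would use compactness of $S = \prod_{i=1}^{n} S_i$ to show that the graph $\Gamma = \{(s, u(s)) : s \in S\}$ has compact closure $\overline{\Gamma}$ in $S \times \mathbb{R}^n$. For any candidate profile $s^*$ that fails to be a Nash equilibrium, pick any accumulation point $u^*$ with $(s^*, u^*) \in \overline{\Gamma}$; by better-reply security, some player $i$ can secure a payoff strictly above $u_i^*$ at $s^*$, meaning there exist $\bar{s}_i \in S_i$, a scalar $\alpha > u_i^*$, and an open neighborhood $V(s^*)$ of $s^*_{-i}$ such that $u_i(\bar{s}_i, s'_{-i}) \geq \alpha$ for all $s'_{-i} \in V(s^*)$.

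Next, I would cover the (putatively Nash-free) strategy space by such secure-improvement neighborhoods, extract a finite subcover $\{V_1, \ldots, V_K\}$ using compactness of $S$, and assemble the associated local secure deviations into a single upper hemicontinuous convex-valued correspondence $\phi : S \rightrightarrows S$ via a partition of unity subordinate to this cover. Convexity of each $S_i$ makes convex combinations of deviations feasible, and quasi-concavity of $u_i$ in $s_i$ ensures that convex combinations of profitable deviations remain weakly profitable --- this is the step where both hypotheses on $S_i$ and $u_i$ are indispensable. Kakutani's fixed-point theorem then yields $s^\dagger \in \phi(s^\dagger)$.

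Finally, I would argue that a fixed point of $\phi$ must be a Nash equilibrium: the construction was designed so that at any non-equilibrium point, $\phi$ strictly displaces at least one coordinate toward a securely improving strategy, so $s^\dagger \in \phi(s^\dagger)$ forces $s^\dagger$ to admit no such improvement and hence to be a Nash equilibrium, contradicting the initial assumption. The main technical difficulty lies in the partition-of-unity step: one must verify that mixing several players' ``secure'' deviations across overlapping patches of the cover still yields strict improvement for the appropriate player on each patch, and that the resulting $\phi$ is both nonempty-valued and upper hemicontinuous --- this is essentially Reny's key innovation over the continuous-payoff case, and I expect it to be the most delicate part of the argument.
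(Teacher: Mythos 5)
The paper does not prove this statement at all: Theorem~\ref{thm:Reny} is imported verbatim from Reny's 1999 \emph{Econometrica} paper and used as a black box, so there is no internal proof to compare yours against. Judged on its own terms, your sketch correctly identifies the architecture of Reny's actual argument (contradiction, compactness of the closure of the payoff graph, a finite cover by securing neighborhoods, a partition of unity, Kakutani), but two steps as written would fail.

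First, you extract your finite subcover from $S$, choosing for each non-equilibrium $s^*$ a single accumulation payoff $u^*$. Better-reply security only hands you a securing player and threshold for a given \emph{pair} $(s^*,u^*)$, and one profile $s^*$ can support many accumulation payoffs requiring different securing players and strategies; the compactness argument must therefore be run on the closure $\overline{\Gamma}$ of the graph inside $S\times\mathbb{R}^n$, covering pairs rather than strategy profiles, with the neighborhoods then projected back appropriately. Second, your concluding step --- ``a fixed point of $\phi$ must be a Nash equilibrium'' --- is not how the argument closes and is not something the construction can deliver: a convex-combination correspondence assembled via a partition of unity will in general have fixed points at non-equilibrium profiles. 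Reny's contradiction is instead direct: the Kakutani fixed point $s^\dagger$ satisfies $(s^\dagger,u(s^\dagger))\in\overline{\Gamma}$, hence lies in some element of the cover whose designated player $i$ can secure a payoff $\alpha$ strictly above $u_i(s^\dagger)$, while quasi-concavity of $u_i$ in $s_i$ together with the fixed-point property forces $u_i(s^\dagger)\geq\alpha$ --- a contradiction with no equilibrium ever being exhibited. You flag the partition-of-unity step as the delicate one but leave it entirely unexecuted, so what you have is a roughly correct plan for Reny's proof rather than a proof.
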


We now prove the main result of this section.
\begin{theorem} \label{thm:asymmetric_continuous_existence}
Chinese auctions with asymmetric, strictly positive valuations and asymmetric continuous budgets have a pure Nash equilibrium.
\end{theorem}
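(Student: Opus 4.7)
The plan is to apply Reny's theorem (Theorem \ref{thm:Reny}). Each strategy space $S_i = \{y \in \mathbb{R}^m : y_j \geq 0,\ y_1 + \cdots + y_m = w_i\}$ is a simplex, hence nonempty, compact, and convex, so the topological hypothesis is immediate. The remaining work is to check quasi-concavity of each $u_i(\cdot, x_{-i})$ and better-reply security of the game.

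For quasi-concavity I would write $u_i(x_i, x_{-i}) = \sum_{j=1}^m f_{i,j}(x_{i,j})$, where $X_j^{-i} = \sum_{k \neq i} x_{k,j}$, $f_{i,j}(y) = v_{i,j} y/(X_j^{-i} + y)$ when $X_j^{-i} + y > 0$, and $f_{i,j}(0) = 0$ otherwise. When $X_j^{-i} > 0$, two differentiations give $f_{i,j}''(y) < 0$ on $[0,\infty)$, so $f_{i,j}$ is concave. When $X_j^{-i} = 0$, $f_{i,j}$ equals $v_{i,j}$ on $(0,\infty)$ and $0$ at the origin; a direct case check verifies concavity of this step function on $[0,\infty)$. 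Hence $u_i$ is a sum of concave functions, so it is concave in $x_i$ and in particular quasi-concave.

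The main obstacle is better-reply security. Let $(s^*, u^*)$ lie in the closure of the payoff graph with $s^*$ not a Nash equilibrium, and let $M^0 = \{j : \sum_k x_{k,j}^* = 0\}$ be the set of ``dead'' items at $s^*$. The easy case is $M^0 = \emptyset$: every denominator at $s^*$ is strictly positive, $u$ is continuous there, $u^* = u(s^*)$, and any profitable deviation $s_i'$ yields a continuous map $s_{-i}' \mapsto u_i(s_i', s_{-i}')$ at $s_{-i}^*$, so the strict gain over $u_i^*$ is retained on a neighborhood and therefore is secured.

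When $M^0 \neq \emptyset$, passing to a subsequence we can write $u_i^* = u_i(s^*) + \sum_{j \in M^0} \tau_{i,j} v_{i,j}$, where $\tau_{i,j} = \lim_n \sigma_{i,j}(s^n) \in [0,1]$ is the limit share of player $i$ on dead item $j$. Since shares sum to at most one along the sequence, $\sum_k \tau_{k,j} \leq 1$ for every $j \in M^0$, and strict positivity of all $v_{k,j}$ together with $n \geq 2$ gives $\sum_k \sum_{j \in M^0}(1 - \tau_{k,j}) v_{k,j} > 0$. Hence at least one player $i^*$ satisfies $\sum_{j \in M^0}(1 - \tau_{i^*,j}) v_{i^*,j} > 0$. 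The plan is to exhibit a securing deviation for $i^*$ of the form $\bar{s}_{i^*} = (1 - \alpha) s_{i^*}^* + \alpha (w_{i^*}/m, \ldots, w_{i^*}/m)$ for small $\alpha > 0$. This deviation places strictly positive weight on every coordinate, so $u_{i^*}(\bar{s}_{i^*}, \cdot)$ is continuous near $s_{-i^*}^*$. At $s_{-i^*}^*$ it wins each dead item outright (no opponent has any weight there) and perturbs active-item shares only by $O(\alpha)$, so for small $\alpha$ we get $u_{i^*}(\bar{s}_{i^*}, s_{-i^*}^*) > u_{i^*}(s^*) + \sum_{j \in M^0} \tau_{i^*,j} v_{i^*,j} = u_{i^*}^*$. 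Continuity in $s_{-i^*}'$ then extends the strict inequality to a neighborhood of $s_{-i^*}^*$, delivering the required secured payoff.

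The hard part is precisely this discontinuity analysis. The payoff jumps down at profiles with a dead item, so $u_i^*$ can genuinely exceed $u_i(s^*)$ and the securing deviation must dominate this larger limit value rather than just $u_i(s^*)$. The budget constraint $\sum_k \tau_{k,j} \leq 1$ combined with strict positivity of all valuations is what forces some player to be under-compensated in the closure graph and makes the uniform-on-dead-items perturbation strictly profitable; without positivity this pigeonhole argument breaks down, which is consistent with the earlier non-existence example.
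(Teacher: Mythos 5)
Your proposal is correct and follows essentially the same route as the paper: both invoke Reny's theorem, verify quasi-concavity of $u_i$ in $x_i$ termwise, and establish better-reply security by observing that on a ``dead'' item the limit shares sum to at most one, so some player's limit payoff under-counts that item's value and a small perturbation placing positive weight everywhere wins the dead items outright while costing only $O(\alpha)$ on the live ones. The only differences are cosmetic --- your pigeonhole is over the valuation-weighted deficit $\sum_{j}(1-\tau_{k,j})v_{k,j}$ and your deviation is a convex combination with the uniform allocation, whereas the paper bounds a single share by $\frac{1}{n}+\frac{1}{n^2}$ and redistributes $\delta$ explicitly --- and your explicit handling of the $X_j^{-i}=0$ case in the concavity check is in fact slightly more careful than the paper's.
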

\begin{proof}
The strategy spaces $S_i = \{y \in \mathbb{R}^{m} | y_j \geq 0, y_1 + \ldots + y_m = w_i\}$ are nonempty, compact, and convex.
%By Lemma \ref{lem:quasi_concave}, 
Again, the utility function of each player $i$ is strictly concave in $x_i$ (and thus it is also quasi-concave). We show that the game is also better-reply secure.
By Reny \cite{Reny06}, all games with continuous payoffs are better-reply secure, and it is sufficient to check the property at the points where the utility functions are discontinuous. 
In this case, the discontinuities occur when there exists an item $j$ such that all the players allocate zero towards that item. That is, the utility functions are discontinuous at the points in the set 
\[
\mathcal{D} = \{ x \in S | \exists j \in M \; \mbox{such that} \; x_{i,j}=0, \forall i \in N\}
\]
Let $(x^{*}, u^{*})$ be in the closure of the graph of the vector payoff function, where $x^{*} \in \mathcal{D}$. Then $u^{*} = \lim_{K \to \infty} (u_1(x^{K}),$ $\ldots$, $u_n(x^{K}))$ for some $x^K \to x^{*}$.
Let $J$ be the set of items on which no player allocates any weight in $x^{*}$:
\[
J = \{j \in M | x_{i,j}^{*} = 0, \forall i \in N\}
\]

Then there exists a player $i$, an item $l$ and $N_0 \in \mathbb{N}$ such that $\frac{x_{i,l}^{K}}{X_{l}^{K}} \leq \frac{1}{n} + \frac{1}{n^2}$, for all $K \geq N_0$. That is, player $i$ gets item $l$ with probability 
less than $\frac{1}{n} + \frac{1}{n^2}$ for all large enough $K$.

For every item $k \not \in J$, we have: $\lim_{K \to \infty} \frac{x_{i,k}^{K}}{X_{k}^{K}} = \frac{x_{i,k}^{*}}{X_{k}^{*}}$. Then $u_i^{*}$ can be rewritten as:
\[
u_{i}^{*} =\left(\sum_{j \in J} \lim_{K \to \infty} \left(\frac{x_{i,j}^{K}}{X_{j}^{K}} \right) v_{i,j} \right) + \left( \sum_{j \not \in J}  \left( \frac{x_{i,j}^{*}}{X_{j}^{*}} \right) v_{i,j} \right)
\]

Let $\delta > 0$ be small enough, and denote by $L_i$ the set of items on which player $i$ allocates strictly positive weight.
That is, $L_i = \{j \in M | x_{i,j}^{*} > 0 \}$. %Note that $|L_i| > 0$, since $x_{i,j}^{*} = 0$, $\forall j \not \in L_i\}$.
Consider a new strategy profile, $x_{i}^{'}$, for player $i$, such that 
\[
x_{i,j}^{'} = 
\left\{
  \begin{array}{ll}
    \frac{\delta}{|J|} & \mbox{if $j \in J$}\\
    x_{i,j}^{*} - \frac{\delta}{|L_i|} & \mbox{if $j \in L_i$} \\
    x_{i,j}^{*} (=0) & \mbox{otherwise}
  \end{array}
\right.
\]
%
%$x_{i,j}^{'} = \frac{\delta}{|J|}$, $\forall j \in J$, and $x_{i,k}^{'} = x_{i,k}^{*} - \frac{\delta}{m - |J|}$, $\forall k \not \in J$.
The utility of $i$ when playing $x_{i}^{'}$ is:
\[
u_{i}(x_{i}^{'}, x_{-i}^{*}) = \left( \sum_{j \in J} v_{i,j} \right) + \left( \sum_{j \in L_i} \left( \frac{x_{i,j}^{*} - \frac{\delta}{|L_i|}}{X_{j}^{*} - \frac{\delta}{|L_i|}} \right) v_{i,j} \right) 
\]
Let $\delta > 0$ be such that $\delta < \min\left(x_{i,j}^{*} | j \in L_i\right)$ and $\delta < \frac{\left(1 - \frac{1}{n} - \frac{1}{n^2}\right) \cdot |L_i| \cdot v_{i,l}}{\left(\sum_{j \in L_i} \frac{v_{i,j}}{X_{j}^{*}}\right)}$.
We have:
\[
u_i(x_{i}^{'}, x_{-i}^{*}) - u_{i}^{*} \geq \left(1 - \frac{1}{n} - \frac{1}{n^2} \right) v_{i,l} - \left( \frac{\delta}{|L_i|} \right) \left( \sum_{j \in L_i} \frac{v_{i,j}}{X_{j}^{*}} \right) > 0,
\]
%Let $\delta < \min(x_{i,j}^{*} | j \in L_i)$ and $\delta < \frac{\left(1 - \frac{1}{n} - \frac{1}{n^2}\right)|L_i|v_{i,l}}{\sum_{k \not \in J} \frac{v_{i,k}}{X_{k}^{*}}}$.
%
% \in \left(0, \frac{(1-c)(m - |J|) v_{i,l}}{\left( \sum_{k \not \in J} \frac{v_{i,k}}{X_{k}^{*}} \right)}\right)$. 
and so $u_i(x_{i}^{'}, x_{-i}^{*}) > u_{i}^{*}$.
The utility functions are continuous at $x^{'} = (x_{i}^{'}, x_{-i}^{*})$, and so there exists $\varepsilon > 0$ such that $u_i(x_{i}^{'}, y_{-i}) > u_{i}^{*}$ for all $y_{-i} \in B(x_{-i}^{*};\varepsilon)$.
This completes the proof that the game is better-reply secure. Thus the conditions of Theorem \ref{thm:Reny} are met and the game has a pure Nash equilibrium.
\end{proof}

We note that the same result holds (with a very similar proof) when the items that no player placed a bid on are given uniformly at random to a player, rather than being kept by the auctioneer.

\subsection{Discrete Budgets} \label{sec:given_discrete}
In this  section we study the game where the budgets are discrete -- in this case, each player has a number of indivisible tickets.
We refer to the subcase in which each player has exactly one ticket as a game with indivisible budgets.

\begin{theorem}
Chinese auctions with symmetric players and indivisible budgets have a pure Nash equilibrium. 
\end{theorem}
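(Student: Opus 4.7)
The plan is to recognize this game as a congestion game and apply the standard potential function argument of Rosenthal. Since the players are symmetric with indivisible budgets, each player holds exactly one ticket of a common weight (say $1$), and a pure strategy is simply a choice of which item to place the ticket on. Given a strategy profile $s = (s_1, \ldots, s_n) \in \{1, \ldots, m\}^n$, let $k_j(s) = |\{i : s_i = j\}|$ be the number of tickets placed on item $j$. Then player $i$'s expected utility is $v_{s_i}/k_{s_i}(s)$, where $v_j$ is the common valuation for item $j$.

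First I would define the candidate potential function
\[
\Phi(s) \;=\; \sum_{j=1}^{m} v_j \, H_{k_j(s)},
\]
where $H_k = \sum_{\ell=1}^{k} 1/\ell$ is the $k$-th harmonic number and $H_0 = 0$. Next I would verify that $\Phi$ is an exact potential. Suppose player $i$ unilaterally switches from item $j$ to item $j'$, turning $s$ into $s'$. Then $k_j$ decreases from $k_j$ to $k_j - 1$ and $k_{j'}$ increases from $k_{j'}$ to $k_{j'} + 1$, so
\[
\Phi(s') - \Phi(s) \;=\; -\,v_j\!\left(H_{k_j} - H_{k_j-1}\right) + v_{j'}\!\left(H_{k_{j'}+1} - H_{k_{j'}}\right) \;=\; \frac{v_{j'}}{k_{j'}+1} - \frac{v_j}{k_j}.
\]
The right-hand side is exactly $u_i(s') - u_i(s)$, so $\Phi$ is an exact potential function for the game.

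Finally, because the joint strategy space $\{1, \ldots, m\}^n$ is finite, $\Phi$ attains its maximum at some profile $s^*$. At $s^*$, no unilateral deviation can strictly increase $\Phi$, and since the change in $\Phi$ equals the change in the deviating player's utility, no player has a strictly improving deviation. Hence $s^*$ is a pure Nash equilibrium.

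There is no serious obstacle here: once one sees that the game is a congestion game with resource-specific payoffs $v_j/k_j$, Rosenthal's potential essentially writes itself. The only minor care needed is to pick the right sign convention (here we \emph{maximize} $\Phi$ because players maximize utility, rather than minimize cost) and to handle $k_j = 0$ via the convention $H_0 = 0$ so that the telescoping identity above is valid.
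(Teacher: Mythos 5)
Your potential-function argument is correct for the game you actually analyze, namely the one in which every player holds a single ticket of the \emph{same} weight: there $\Phi(s)=\sum_j v_j H_{k_j(s)}$ is an exact potential and any maximizer of it is a pure Nash equilibrium. The difficulty is that this is not quite the game the theorem is about. In this paper ``indivisible budgets'' means each player has one indivisible ticket of weight $w_i$, and these weights may differ across players; ``symmetric players'' refers to the valuations $v_j$ being common to all players. This reading is forced by the surrounding results (the later theorem must add the qualifier ``\emph{symmetric} indivisible budgets'' when it wants equal weights) and by the paper's own proof, which inserts tickets greedily in decreasing order of ticket weight and uses the inequality $w_k\ge w_l$ in an essential way.

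With heterogeneous weights the payoff for placing a ticket of weight $w_i$ on item $j$ is $w_i v_j/X_j$, where $X_j$ is the total weight in basket $j$; this is a \emph{weighted} singleton congestion game, and it admits no exact potential in general. Concretely, with two players of weights $w_1\neq w_2$ and two items $A,B$, the sum of the deviating players' utility changes around the four-cycle $(A,A)\to(B,A)\to(B,B)\to(A,B)\to(A,A)$ equals $(w_2-w_1)(v_A+v_B)/(w_1+w_2)\neq 0$, which violates the cycle condition characterizing exact potential games, so no analogue of your $\Phi$ can exist. (An ordinal potential or the finite-improvement property might still hold, but that would need a separate argument.) The paper sidesteps this with a direct greedy construction: insert tickets in decreasing order of weight, each into the basket maximizing $\frac{w_k}{X_j+w_k}v_j$, and show that a profitable deviation by some player $k$ would contradict the choice made by the last (lighter) player to enter the basket that $k$ abandons. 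To repair your proof you would either have to restrict to equal ticket weights, which proves a strictly weaker statement, or replace the exact-potential telescoping by such a direct stability argument.
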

\begin{proof}
Consider the following assignment of players to items:
\begin{itemize}
\item For each player $k \in N$ in decreasing order of ticket weight:
\begin{itemize}
\item Assign the ticket of player $k$ to the item $j$ such that $\left(\frac{w_k}{X_j + w_k}\right)v_j$ is maximal,
where $X_j$ is the weight of the existing tickets at item $j$.
\item $X_j \leftarrow X_j + w_k$
\end{itemize}
\end{itemize}
Assume by contradiction that the assignment is not stable. Then there exists a player $k$ who can deviate, 
by moving his ticket from bin $s_i$ to bin $s_j$, for some $i,j$. For the deviation to be an improvement, 
it must be the case that:
\begin{equation} \label{eq:improv1}
\left(\frac{w_k}{X_j + w_k}\right)v_j > \left(\frac{w_k}{X_i}\right)v_i
\end{equation}
Consider the last player, $l$, who placed a ticket in bin $s_i$. At the time player $l$ placed his ticket, it
must have been the case that bin $s_i$ was preferable to bin $s_j$, i.e. 
\begin{equation} \label{eq:improv2}
\left(\frac{w_l}{X_i}\right)v_i > \left(\frac{w_l}{X_j^{'} + w_l}\right)v_j
\end{equation}
where $X_j^{'}$ was the weight of bin $s_j$ when player $l$ placed his ticket. Since $X_j$ is the final weight 
at item $j$, we have that $X_j \geq X_j^{'}$.
Finally, since the players are assigned in decreasing order of weights, we have that $w_k \geq w_l$, which
combined with equations \ref{eq:improv1} and \ref{eq:improv2} give:
\[
\left(\frac{v_j}{v_i}\right)X_i - X_j > w_k \geq w_l \geq \left(\frac{v_j}{v_i}\right)X_i - X_j
\]
This is a contradiction, thus the assumption must have been false, and the assignment is stable.
\end{proof}

\begin{proposition}
Chinese auctions with symmetric valuations and asymmetric discrete budgets do not necessarily have a pure 
Nash equilibrium, even in the case of two items.
\end{proposition}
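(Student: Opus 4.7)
The plan is to exhibit a small concrete instance and verify by enumeration that none of its finitely many pure strategy profiles is stable. Specifically, I would consider a two-player, two-item game with symmetric valuations $v_{i,j}=1$ for all $i,j$, in which Player~1 holds a single indivisible ticket of weight $1$ and Player~2 holds three indivisible tickets, each of weight~$1$. Player~1 then has two pure strategies (place the ticket on item~$1$ or on item~$2$), while Player~2 has four, corresponding to the distributions $(3,0)$, $(2,1)$, $(1,2)$, $(0,3)$ of her three tickets between the two items, giving eight profiles in total.

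I would first invoke the symmetry swapping the two items, which reduces the work to checking only the four profiles in which Player~1 places her ticket on item~$1$. For each of these I would compute the utilities directly from Definition~\ref{def:utility} and exhibit a strictly profitable deviation. The pattern that emerges is uniform: whenever Player~2 plays an imbalanced distribution, namely $(3,0)$, $(1,2)$, or $(0,3)$, she strictly improves by switching to $(2,1)$, since that distribution both contests Player~1 on the occupied item and secures the other item outright; and in the remaining profile where Player~2 already plays $(2,1)$, Player~1's utility is only $1/3$, whereas relocating her ticket to item~$2$ raises it to $1/2$. Combined with the item-swap symmetry, this shows that none of the eight profiles is a pure Nash equilibrium.

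The underlying reason for non-existence is a best-response cycle: Player~2 wants to place two tickets on whichever item Player~1 occupies and one on the other, while Player~1 wants to flee to whichever item carries less Player-2 weight, and no pure profile reconciles both preferences simultaneously. There is no real obstacle here beyond being exhaustive over the eight profiles and performing the single-fraction utility comparisons, all of which are routine.
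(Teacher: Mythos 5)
Your proposal is correct and is essentially the paper's own proof: the paper uses the identical instance (one player with three unit tickets, one with a single unit ticket, two equally valued items), checks one representative profile, and asserts the rest are verified similarly. Your write-up is merely more explicit about the exhaustive case analysis and the use of the item-swap symmetry, and all of your utility computations check out.
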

\begin{proof}
Consider two players, with budgets $w_1 = 3$ and $w_2 = 1$, respectively, where all the coins have size $1$, and two items, 
such that $v_1 = v_2 = v$.
Consider for example the assignment in which player $1$ assigns two tickets to
item $1$ and one ticket to item $2$, while player $2$ assigns one ticket to item $2$. The expected value 
of player $2$ is $u_2 = \frac{v}{3}$. Player $2$ can deviate by placing his ticket on item $2$ instead, which
would give him higher expected utility: $u_2^{'} = \frac{v}{2} > u_2$. The other assignments can be similarly
verified.
\end{proof}

\begin{proposition}
Chinese auctions with two items, asymmetric valuations, and asymmetric indivisible budgets have a pure Nash equilibrium.
\end{proposition}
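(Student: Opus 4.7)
The plan is to exploit the two-item structure: with two items and one indivisible ticket per player, a pure strategy profile is simply a partition $(S_1, S_2)$ of the players, where $S_a$ is the set of those placing their ticket on item $a$. Writing $W_a = \sum_{i \in S_a} w_i$ and $T = \sum_j w_j$, the first step is to reduce each player's preference to a single threshold condition. Letting $W_a'$ denote the weight of players other than $i$ on item $a$, so that $W_1' + W_2' = T - w_i$, the comparison $v_{i,1}/(W_1' + w_i) \geq v_{i,2}/(W_2' + w_i)$ rearranges to $W_2' \geq \theta_i$ with
\[
\theta_i \;=\; \frac{v_{i,2}\,T - v_{i,1}\,w_i}{v_{i,1} + v_{i,2}}
\]
(with a natural convention when $v_{i,1} + v_{i,2} = 0$). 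Player $i$ thus weakly prefers item $1$ iff the weight of the others on item $2$ is at least $\theta_i$, and a partition with total item-$2$ weight $W_2$ is a pure Nash equilibrium iff $\theta_i \leq W_2$ for every $i \in S_1$ and $\theta_i + w_i \geq W_2$ for every $i \in S_2$.

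The next step is to exhibit such a partition. I would run a better-response dynamics starting from the all-item-$1$ profile, choosing a deterministic move-selection rule---for example, at each step moving the player whose threshold is most violated---and argue termination via a lexicographic potential combining $W_2$ with the multiset of thresholds on each side. An alternative, extremal approach is to consider the partition with the largest $W_2$ satisfying the $S_1$-inequalities, and to verify via a swap/exchange argument that the $S_2$-inequalities must also hold at such an extremum.

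The main obstacle is the asymmetry between the two families of constraints under switching: moving a player from item $1$ to item $2$ relaxes the $S_1$-inequalities for the remaining item-$1$ players (helpful) but may violate the $S_2$-inequality for some existing item-$2$ player (harmful), so naive best-response can a priori cycle. Pinning down the correct move-selection rule, or the right exchange lemma in the extremal approach, that rules out such cycles is the technical core of the proof.
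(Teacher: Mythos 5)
Your reduction to the threshold form is correct and is essentially the right way to organize the problem: player $i$ weakly prefers item $1$ exactly when the others' weight on item $2$ is at least $\theta_i = (v_{i,2}T - v_{i,1}w_i)/(v_{i,1}+v_{i,2})$, and a partition is an equilibrium iff $\theta_i \le W_2$ for all $i \in S_1$ and $W_2 \le \theta_i + w_i$ for all $i \in S_2$. But the proposal stops exactly where the proof has to start: you correctly name the difficulty (pushing weight onto item $2$ tightens the constraints of the players already there) and then defer both the move-selection rule and the exchange lemma that would resolve it, so neither of your two routes is yet a proof. The better-response route supplies no actual potential, only the intention to find one. The extremal route is degenerate as formulated: the partition maximizing $W_2$ subject to the $S_1$-inequalities is the all-item-$2$ partition, since with $S_1 = \emptyset$ those inequalities are vacuous, and that partition is in general not an equilibrium; you would need to maximize $W_2$ over a different feasible set.

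The paper's own proof supplies the missing ingredient in the form of a specific one-pass greedy with a specific ordering key: start with everyone on item $1$ and repeatedly take, among the players still on item $1$, the one with the smallest value of $w_i\,(v_{i,1}/v_{i,2})$, moving it to item $2$ if and only if the move improves its utility, and stopping when it does not. Stability of the players remaining on item $1$ then follows because the one who likes item $1$ least under this key declined to switch, and stability of the players on item $2$ is argued by observing that the last arrival still prefers item $2$ (nothing changes after its move) and that all earlier arrivals, having smaller keys, like item $2$ at least as much. The content your plan lacks is precisely this choice of ordering and the monotonicity claim attached to it; without some such lemma, the cycling you yourself point out is not ruled out, so the argument is incomplete rather than wrong.
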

\begin{proof}
Consider the following assignment:
\begin{itemize}
\item Assign all the players to item $1$.
\begin{itemize}
\item Iteratively, take the player $i$ with the lowest value of 
$w_i\left(\frac{v_{i,1}}{v_{i,2}}\right)$ among the players bidding on item $1$.
Move player $i$'s ticket to item $2$ if the move improves $i$'s utility.
\end{itemize}
\end{itemize}
The resulting assignment is an equilibrium. 
None of the players at item $1$ have an incentive to move to 
$2$, since the player $i$ who likes $1$ the least (with the lowest ratio $w_i\left(\frac{v_{i,1}}{v_{i,2}}\right)$
among the players at $1$) did not switch. In addition, none of the players at item $2$ have an incentive to
switch back to item $1$, since the last player $j$ who arrived at $2$ does not want to switch, and all the previous 
players at $2$ like this item at least as much as player $j$.
\end{proof}

\begin{theorem}
Chinese auctions with asymmetric valuations and symmetric indivisible budgets have a pure Nash equilibrium.
\end{theorem}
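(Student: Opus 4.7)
The plan is to recognize this game as a singleton congestion game with player-specific, strictly decreasing payoff functions: each player picks exactly one item $j$ and receives $v_{i,j}/n_j(x)$, which depends only on the chosen item, the load $n_j$ at that item, and the player's own valuations. Pure Nash equilibrium existence for this class was established by Milchtaich (\emph{Congestion games with player-specific payoff functions}, Games Econ.\ Behav., 1996), so the cleanest route is to verify Milchtaich's hypothesis (the payoff $v_{i,j}/n_j$ is strictly decreasing in $n_j$) and cite that theorem.

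For a self-contained argument in the spirit of the surrounding propositions, I would attempt a constructive best-response proof. Starting from any initial assignment, run best-response dynamics: while some player has a strictly improving deviation, let that player move. The terminal state, if reached, is by definition a pure Nash equilibrium, so the substantive content is termination. A naive greedy order --- processing players in some fixed order, each picking a best item given the current partial assignment --- does not suffice: one can construct a four-player, two-item example (for instance, with valuations $(100,50), (100,50), (50,100), (100,1)$) where plain greedy leaves a player with an improving deviation, while the unique-looking equilibrium puts the three item-$1$-lovers on item $1$ and the item-$2$-lover alone on item $2$. The plan is therefore to select the deviator according to a carefully chosen priority rule --- for example, always let the player with the largest multiplicative gain $u_i(y_i,x_{-i})/u_i(x)$ move next, with a fixed tie-break --- and then track a lexicographic potential on the sorted vector of current utilities, showing that under this rule the potential strictly changes (in a fixed direction) at every improving move, which forces termination on the finite state space.

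The main obstacle is exactly this termination argument: the game is \emph{not} an exact potential game, because an improving move by $i$ from $j$ to $j'$ simultaneously raises the utility of $i$ and of everyone left behind at $j$ while lowering the utility of everyone at $j'$, defeating any Rosenthal-style potential. Designing a priority rule together with a matching lexicographic potential that is monotone under every allowed improvement is the delicate step; if it proves too cumbersome to verify cleanly, one falls back to Milchtaich's theorem, whose hypotheses are satisfied verbatim since $v_{i,j}/n_j$ is strictly decreasing in $n_j$, and which yields existence immediately.
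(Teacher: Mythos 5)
Your first route is correct and, at bottom, coincides with what the paper does. With one indivisible ticket per player the game is exactly a singleton congestion game with player-specific payoffs $v_{i,j}/n_j$, and these are nonincreasing in the load $n_j$ (not quite \emph{strictly} decreasing as you say, since $v_{i,j}$ may be $0$, but Milchtaich's theorem only needs nonincreasing), so citing Milchtaich (1996) settles existence immediately. The paper instead gives an explicit procedure (Algorithm \ref{alg2}): insert players one at a time, each choosing the item maximizing $v_{i,j}/(n_j+1)$, and after each insertion follow a chain of single-player deviations away from the ``active'' item until it stabilizes; the stated invariants are that only players at the active item can profit from deviating and that each player moves at most once per insertion. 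This is essentially Milchtaich's own inductive existence argument made concrete, so your citation and the paper's construction are the same proof in two presentations --- yours buys brevity, the paper's buys an algorithm. Where you would have gotten into trouble is your self-contained fallback: these games do \emph{not} have the finite improvement property (Milchtaich exhibits best-response cycles), so no potential --- lexicographic on sorted utilities or otherwise --- decreases along every improving move from an arbitrary starting profile, and a priority rule based on largest gain is not known to repair this. The constructive argument that does work is precisely the insertion-based one: maintain an equilibrium of the game restricted to the first $k$ players, add player $k+1$, and exploit the fact that an insertion degrades exactly one item, so the set of potential deviators is always confined to the most recently loaded item and the deviation chain terminates. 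If you want a self-contained proof, reorganize your dynamics around that invariant rather than around a global potential.
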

\begin{proof}
It can be verified that the allocation given by Algorithm \ref{alg2} is a pure Nash equilibrium. Note that at each step
during the algorithm, only one player can deviate (at the active item). Moreover, each player can deviate at most once
in each iteration, since the current bin never degrades during the current iteration and the other bins do not improve.
%since once a player deviated, he likes the current bin with the existing number of players. The number
%of players at this bin never gets worse during the current iteration, while the other bins do not improve.
%%%%%%%%%%%%%%%%%%%%SIMINA::TODO::: Write this formally and in more detail%%%%%%%%%%%%%%%%%%%%
\end{proof}

%\vspace{-4mm}
%\restylealgo{ruled}
\begin{algorithm}[htbp]\label{alg2}
%\begin{algorithm}[H]\label{alg1}
%\SetVline
%\linesnumbered
\caption{Equilibrium for Asymmetric Valuations and Symmetric Indivisible Budgets}
  \ForEach{$j \in [m]$} {
    $n_j \leftarrow 0$\\
  }
  \ForEach{player $i \in N$} {
    Assign $i$ to the item $j$ which maximizes $\frac{v_{i,j}}{n_j + 1}$\\
    $n_j \leftarrow n_j + 1$\\
    $a \leftarrow j$ // active item\\
    \While{$\exists$ player $l$ which can deviate from $a$} {
      Move $l$ to the item $k$ which maximizes $\frac{v_{l,k}}{n_k + 1}$\\
      $n_a \leftarrow n_a - 1$\\
      $n_k \leftarrow n_k + 1$\\
      $a \leftarrow k$\\
    }
  }
\end{algorithm}
%\vspace{-2mm}

\section{Costly Tickets} \label{sec:costly}

In this section we analyze the game when the tickets are costly. The costly tickets scenario results in a model similar to the Tullock contest and other rent-seeking problems.
When each player $i$ allocates weight $x_{i,j} \geq 0$ on item $j$, the utility of player $i$ is: 
\begin{equation} \label{def:costly_utility}
u_i(x) = \sum_{j = 1}^{m} \left( \sigma_{i,j}(x) v_{i,j} - x_{i,j} \right)
\end{equation}
When the budgets are discrete, the definition is equivalent to:
\[
u_i(x) = \sum_{j = 1}^{m} \left( \sigma_{i,j}(x) v_{i,j} \right) - w_i
\]

\subsection{Continuous Budgets} \label{sec:costly_continuous}

In this section we analyze the game when the budgets are continuous and the tickets are costly.
First, note that the expected value of player $i$ from an item $j$ is at most $v_j$. Thus in any pure Nash equilibrium, it should be the case that $x_{i,j} \leq v_{i,j}$. Thus it is sufficient 
to study the game when the strategy spaces are restricted to $S_{i}^{m-1} = \{y \in \mathbb{R}^{m} | 0 \leq y_j \leq v_{i,j}, \forall j \in M \}$, for every player $i \in N$.

We have the following results for continuous budgets.

\begin{theorem} \label{thm:symmetric_costly}
Chinese auctions with symmetric valuations and costly continuous budgets have a symmetric pure Nash equilibrium.
\end{theorem}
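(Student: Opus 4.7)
The plan is to guess a symmetric equilibrium of the form $(y_1,\ldots,y_m)$ in which every player allocates $y_j$ to item $j$, and solve for $y$ by demanding that $y$ itself be a best response to itself. The explicit candidate I aim for is $y_j=(n-1)v_j/n^2$, which is the classical symmetric Tullock spend when $n$ players compete for a single prize of value $v_j$; the crucial observation is that in the costly continuous setting the $m$ items decouple, so the multi-item game reduces to $m$ independent single-item Tullock contests.

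To derive and verify the candidate, I would fix the other $n-1$ players at $(y_1,\ldots,y_m)$ and write player $i$'s payoff at $(x_1,\ldots,x_m)$ as
\[
u_i(x) \;=\; \sum_{j=1}^{m} \left( \frac{x_j\, v_j}{x_j + (n-1)y_j} \;-\; x_j \right).
\]
The structural point is that this objective is separable in the $x_j$'s and each summand is strictly concave (the second derivative is $-2(n-1)y_jv_j/(x_j+(n-1)y_j)^3 < 0$ whenever $y_j>0$). Thus the best response is the unique maximizer picked out by the first-order conditions $\partial u_i/\partial x_j = 0$. Imposing the fixed-point condition $x_j=y_j$ gives $(ny_j)^2 = (n-1)y_j v_j$, hence $y_j=(n-1)v_j/n^2$ as claimed.

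Next I would perform three verifications. First, admissibility: since $(n-1)/n^2<1$, the candidate satisfies $0\le y_j\le v_j$ and therefore lies in the restricted strategy box $\prod_j[0,v_j]$ that the paper reduces to just before the theorem. Second, global optimality: separability together with strict concavity in each coordinate upgrades the first-order condition from a local to a unique global maximum, so no unilateral deviation can strictly improve player $i$'s payoff. Third, individual rationality: the equilibrium payoff evaluates to $\sum_j v_j/n^2 \ge 0$, so no player prefers the opt-out strategy $x\equiv 0$.

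The one step that needs care is the treatment of coordinates with $v_j=0$, where the candidate collapses to $y_j=0$ and the strict-concavity argument above degenerates. Here I would simply argue directly: if all opponents put nothing on item $j$ and $v_j=0$, then any deviation $x_j>0$ contributes $0\cdot 1 - x_j = -x_j<0$ to player $i$'s utility, so $x_j=0$ remains strictly optimal. Apart from this boundary case the argument is a clean Tullock-style FOC computation, made short by the fact that costly continuous budgets, unlike the fixed-budget setting of Theorem~\ref{thm:symmetric_cont_val}, carry no simplex constraint coupling the coordinates and thus require no Lagrange multiplier or KKT machinery.
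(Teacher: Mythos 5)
Your proposal is correct and follows essentially the same route as the paper: both fix the opponents at the symmetric profile $y_j = \left(\frac{n-1}{n^2}\right)v_j$ and verify, via the first-order conditions together with strict concavity of each player's payoff in his own allocation (equivalently, the paper's strict convexity of $\sum_{j}\left(\frac{b_j}{a_j+y_j}+y_j\right)$), that this profile is its own best response. Your explicit use of separability across items and your direct handling of coordinates with $v_j=0$ --- where the paper's appeal to Lemma~\ref{lem:convex} would require $a_j,b_j>0$ --- are welcome tightenings of the same argument rather than a different approach.
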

\begin{proof}
We show that the allocation $x_{i,j} =\left(\frac{n-1}{n^2}\right) v_j$, $\forall i \in N$ and $j \in \{1, \ldots, m\}$ is a pure Nash equilibrium.
For each $i \in N$, the utility of player $i$ when the other players allocate $x$ is:
\begin{eqnarray*}
u_i(y, x_{-i}) & = & \sum_{j = 1}^{m} \left( \frac{y_j \cdot v_j}{ \left( \sum_{k \neq i} \left( \frac{n-1}{n^2} \right) v_j\right) + y_j} - y_j \right) \\
& = & \left( \sum_{j=1}^{m} \frac{y_j \cdot v_j}{\left( \frac{n-1}{n}\right)^{2} v_j + y_j} - y_j \right)
\end{eqnarray*}
Player $i$'s utility can be rewritten as:
\[
u_i(y, x_{-i}) =  \left( \sum_{j=1}^{m} v_j \right) - \left( \sum_{j=1}^{m} \frac{b_j}{a_j + y_j} + y_j \right)
\]
where $a_j = \left( \frac{n-1}{n}\right)^{2} v_j$ and $b_j = \left( \frac{n-1}{n}\right)^{2}v_j^2$, $\forall j \in \{1, \ldots, m\}$.

Let $f:S_{i}^{m-1} \rightarrow \mathbb{R}$, $f(y) = \sum_{j=1}^{m} \left( \frac{b_j}{a_j + y_j} + y_j \right)$. Similarly to Theorem \ref{thm:symmetric_cont_val}, 
$f$ is strictly convex and $y$ maximizes $u_i(y, x_{-i})$ if and only if $y$ is a local minimum of $f$.
It can be verified that $y_j =  \left(\frac{n-1}{n^2}\right) v_j$  is a local minimum of $f$, and so the best response of player $i$ when the other players
allocate $x_{-i}$ is to allocate $x_i$. Thus the game has a symmetric pure Nash equilibrium where each player $i$ allocates $x_{i,j} = \left(\frac{n-1}{n^2}\right) v_j$ 
on item $j$.
\end{proof}

Similarly to the given budgets analysis, the game is not guaranteed to have a pure Nash equilibrium when the valuations can be zero, but the auctioneer
can guarantee the existence of an equilibrium by placing his own ticket in each basket.

\begin{proposition}
Chinese auctions with asymmetric valuations and costly continuous budgets do not necessarily have a pure Nash equilibrium.
\end{proposition}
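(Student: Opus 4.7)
The plan is to exhibit an explicit two--player, two--item counterexample where one player has a zero valuation, and use the resulting discontinuity in the payoff function to rule out any pure Nash equilibrium. Concretely, I would try valuations of the form $v_{1,1}=0$, $v_{1,2}=1$, $v_{2,1}=a$, $v_{2,2}=b$ for some $a,b>0$ (say $a=1$, $b=3$, mirroring the given-tickets counterexample earlier in the paper). The intuition is exactly the one that broke the given-budget case: a player competing alone for an item whose value exceeds its cost wants to bid infinitesimally, giving rise to an open best-response set.

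First, I would show that in any alleged equilibrium $x^*$, player $1$ must set $x^*_{1,1}=0$. Because the tickets are costly and $v_{1,1}=0$, player $1$'s contribution from item $1$ is $\sigma_{1,1}(x)\cdot 0 - x_{1,1} = -x_{1,1}$, which is strictly maximized at $x_{1,1}=0$ regardless of player $2$'s choice. This pins down player $1$'s behavior on item $1$ and reduces the problem to analyzing player $2$'s best response on item $1$.

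Next, I would analyze player $2$'s best response on item $1$ against $x^*_{1,1}=0$. For any $t>0$, setting $x_{2,1}=t$ yields $\sigma_{2,1}=1$ and a net contribution of $a-t$ from item $1$; setting $x_{2,1}=0$ yields $\sigma_{2,1}=0$ and a contribution of $0$. Thus player $2$'s payoff on item $1$ has supremum $a$, approached as $t\downarrow 0$ but attained at no feasible bid. In particular, whatever value $x^*_{2,1}$ takes in the claimed equilibrium, player $2$ can strictly improve: if $x^*_{2,1}>0$, cut it in half and gain $x^*_{2,1}/2$; if $x^*_{2,1}=0$, bid any $t\in(0,a)$ and gain $a-t>0$. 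Either way, $x^*$ fails to be a pure Nash equilibrium, completing the contradiction.

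I do not expect any serious obstacle: this is a textbook ``open supremum'' argument that parallels the given-tickets proposition already in the paper, and the decoupling caused by $v_{1,1}=0$ (which kills player $1$'s bid on item $1$) makes the analysis on item $1$ entirely independent of the rest of the bids, so the contradiction is immediate without any delicate case analysis over $x^*_{1,2}$ and $x^*_{2,2}$.
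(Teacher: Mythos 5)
Your proposal is correct and follows essentially the same argument as the paper: the paper also uses a two-player, two-item example with one zero valuation (there, $v_{1,2}=0$ with $v_{2,2}=1$), deduces that the zero-valuation player abstains from that item, and then derives a contradiction from the open best-response set of the lone remaining bidder (halve a positive bid, or bid a small $\varepsilon>0$ against a zero bid). The only cosmetic difference is that you justify the abstention by strict dominance of $x_{1,1}=0$, whereas the paper invokes its earlier restriction $x_{i,j}\leq v_{i,j}$.
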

\begin{proof}
Consider two players with valuations $v_{1,1} = 1$, $v_{1,2} = 0$ and $v_{2,1}=1$, $v_{2,2}=1$.
Assume by contradiction that the game has a pure Nash equilibrium at $(x_1,x_2)$. From Theorem \ref{thm:symmetric_costly}, we have that
$0 \leq x_{i,j} \leq v_{i,j}$, $\forall i,j \in \{1,2\}$.
If $x_{2,2} > 0$, then player $2$ can improve his utility by deviating to $x_2^{'} = (x_{2,1}, \frac{x_{2,2}}{2})$. If $x_{2,2} = 0$, then there
exists $\varepsilon > 0$ such that by playing $x_{2}^{''} = (x_{2,1}, \varepsilon)$, player $2$ gets $u_{2}(x_1, x_2^{''}) > u_{2}(x_1, x_2)$. Thus the game has no
pure Nash equilibrium.
\end{proof}

\begin{theorem} \label{thm:auctioneer_ticket_given}
Chinese auctions with asymmetric valuations and continuous costly budgets have a pure Nash equilibrium when the auctioneer places a strictly positive ticket in each basket.
\end{theorem}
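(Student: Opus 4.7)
The plan is to mirror the structure of the proof of Theorem \ref{thm:auctioneer_ticket} and invoke Glicksberg's theorem (Theorem \ref{thm:Glicksberg}), since introducing a positive auctioneer ticket $\Delta_j > 0$ in every basket again removes the only source of discontinuity in the payoffs. The only substantive differences from the given-budgets case are that the utility now contains an explicit $-x_{i,j}$ cost term and the strategy space is no longer a simplex but the box noted earlier in Section \ref{sec:costly_continuous}.

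First I would fix the strategy space to be $S_i = \{y \in \mathbb{R}^m \mid 0 \leq y_j \leq v_{i,j},\ \forall j \in M\}$, using the already-established observation that in any equilibrium no player bids above his valuation on any item. This set is a nonempty, compact, convex subset of Euclidean space, satisfying the first hypothesis of Theorem \ref{thm:Glicksberg}. The utility of player $i$ under a profile $x$ becomes
\[
u_i(x) \;=\; \sum_{j=1}^m \left( \frac{x_{i,j}}{\Delta_j + X_j^{-i} + x_{i,j}} \, v_{i,j} \;-\; x_{i,j} \right),
\]
where $X_j^{-i} = \sum_{k\neq i} x_{k,j}$.

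Next I would verify the two analytic hypotheses. Continuity of $u_i$ in $x$ is immediate, because each denominator is bounded below by $\Delta_j > 0$, so no singularity can arise. For quasi-concavity in $x_i$, I would check the stronger property that $u_i$ is strictly concave in $x_i$: each term depends on only the single coordinate $x_{i,j}$, the map $t \mapsto \tfrac{t\, v_{i,j}}{c + t}$ with $c = \Delta_j + X_j^{-i} > 0$ has second derivative $-\tfrac{2 c\, v_{i,j}}{(c+t)^3} < 0$ (when $v_{i,j} > 0$; it is linear and hence concave when $v_{i,j} = 0$), and the cost term $-x_{i,j}$ is affine. Summing yields a function with a diagonal, negative semidefinite Hessian, hence concave, hence quasi-concave.

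With nonempty, compact, convex strategy spaces and continuous, quasi-concave payoffs in own strategy, Theorem \ref{thm:Glicksberg} delivers a pure Nash equilibrium, completing the proof. I do not anticipate a major obstacle; the only small care point is handling the coordinates $j$ for which $v_{i,j}=0$ so that $S_i$ remains nonempty and each summand of $u_i$ is still (weakly) concave, which is straightforward since such coordinates are forced to $0$ in any best response.
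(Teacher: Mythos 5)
Your proposal is correct and follows essentially the same route as the paper: restrict each $S_i$ to the box $\{y \mid 0 \leq y_j \leq v_{i,j}\}$, note that the auctioneer's ticket $\Delta_j > 0$ makes every denominator bounded away from zero so $u_i$ is continuous, verify (quasi-)concavity of $u_i$ in $x_i$, and invoke Theorem \ref{thm:Glicksberg}. Your explicit second-derivative check of concavity and the remark about coordinates with $v_{i,j}=0$ are just slightly more detailed than the paper's one-line assertions.
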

\begin{proof}
Let $\Delta_j > 0$ be the ticket placed by the auctioneer in each basket. For each strategy vector $x$, the utility of player $i$ is:
\begin{equation} \label{eq:utility_delta_ticket_given}
u_i(x) = \sum_{j=1}^{m} \left( \left( \frac{x_{i,j}}{\Delta_j + X_j} \right) v_{i,j} - x_{i,j} \right)
\end{equation}
where $X_{j} = \sum_{k \in N} x_{k,j}$ is the weight placed by all players on item $j$.
The utility function of player $i$, $u_i(x)$ is quasi-concave in $x_i$. The strategy spaces $S_i = \{y \in \mathbb{R}^{m} | 0 \leq y_j \leq v_{i,j}, \forall j \in \{1,\ldots, m\} \}$ are nonempty, compact, and convex.
Moreover, $u_i(x)$ is continuous in $x$ since the denominator of each term in the sum of Equation (\ref{eq:utility_delta_ticket_given}) is strictly positive.
Thus the conditions of Theorem \ref{thm:Glicksberg} apply, and the game has a pure strategy Nash equilibrium when the auctioneer places a ticket in each basket.
\end{proof}

Finally, when the valuations are strictly positive, a pure Nash equilibrium is guaranteed to exist.

\begin{theorem} \label{thm:asymmetric_continuous_given_existence}
Chinese auctions with asymmetric, strictly positive valuations and costly continuous budgets have a pure Nash equilibrium.
\end{theorem}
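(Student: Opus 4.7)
The plan is to apply Reny's Theorem (Theorem~\ref{thm:Reny}) in the same spirit as Theorem~\ref{thm:asymmetric_continuous_existence}. As noted just before Theorem~\ref{thm:symmetric_costly}, I would take each strategy space to be $S_i = \{y \in \mathbb{R}^m : 0 \le y_j \le v_{i,j},\ \forall j \in M\}$, which is nonempty, compact, and convex. The first condition to check is quasi-concavity of $u_i$ in $x_i$: on coordinates $j$ with $X_j^{-i} := \sum_{k\neq i} x_{k,j} > 0$ the contribution $\frac{x_{i,j}v_{i,j}}{X_j^{-i}+x_{i,j}} - x_{i,j}$ is strictly concave in $x_{i,j}$, while on coordinates with $X_j^{-i}=0$ the contribution $v_{i,j}\mathbf{1}\{x_{i,j}>0\}-x_{i,j}$ has a single upward jump at $0$ and is concave above it. A direct level-set computation shows that the sum of such one-dimensional functions over disjoint coordinates has convex upper level sets, giving quasi-concavity of $u_i$ in $x_i$.

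The substantive step is verifying better-reply security. As in Theorem~\ref{thm:asymmetric_continuous_existence}, by Reny I only need to examine discontinuity points $x^* \in \mathcal{D} = \{x \in S : \exists j,\ X_j = 0\}$. Fix $(x^*, u^*)$ in the closure of the payoff graph, and let $J = \{j \in M : X_j^* = 0\}$. Picking any sequence $x^K \to x^*$ with $u(x^K) \to u^*$, after passing to a subsequence I may assume $X_j^K > 0$ for all large $K$ and $j \in J$ (otherwise the $j$-contribution is continuous along the sequence), so $\alpha_{i,j} := \lim_K \sigma_{i,j}(x^K)$ exists with $\sum_i \alpha_{i,j}=1$ for each $j \in J$. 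In particular, for any fixed $j^* \in J$ there is a player $i$ with $\alpha_{i,j^*} < 1$.

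For that player $i$, I would consider the deviation $x_i'$ that agrees with $x_i^*$ on $M \setminus J$ and places $\delta/|J|$ on each item in $J$, for small $\delta > 0$. Because no opponent bids on items in $J$ under $x_{-i}^*$, player $i$ wins every such item outright, so
\[
u_i(x_i', x_{-i}^*) - u_i^* = \sum_{j \in J} (1 - \alpha_{i,j}) v_{i,j} - \delta \ \ge\ (1 - \alpha_{i,j^*}) v_{i,j^*} - \delta,
\]
which is strictly positive for $\delta$ small enough, since $v_{i,j^*} > 0$. Moreover, $x_{i,l}' > 0$ for every $l \in J$ and $X_j^* > 0$ for every $j \notin J$, so $y_{-i} \mapsto u_i(x_i', y_{-i})$ is continuous at $x_{-i}^*$; the strict inequality then persists on an open neighborhood of $x_{-i}^*$, establishing better-reply security. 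With the conditions of Theorem~\ref{thm:Reny} in place, a pure Nash equilibrium exists.

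The hard part will be the combinatorial bookkeeping for quasi-concavity when several coordinates simultaneously carry a jump, since sums of quasi-concave functions need not be quasi-concave in general; one must verify directly that the union of the convex pieces remains convex, which follows from a short inductive argument exploiting the fact that each jump is upward. Beyond this, the proof is a straightforward transcription of the given-budgets case, the only real change being that the $-x_{i,j}$ cost terms contribute a harmless $-\delta$ penalty to the securing deviation.
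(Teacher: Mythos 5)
Your proof is correct and follows the same route as the paper, which itself only sketches this result by reference to Theorem~\ref{thm:asymmetric_continuous_existence}: Reny's theorem applied with the box strategy spaces $S_i=\{y: 0\le y_j\le v_{i,j}\}$, quasi-concavity in own strategy, and a securing deviation that floods the dead items $J$ with weight $\delta/|J|$ (cleaner here than in the given-budgets case precisely because there is no budget constraint, so the only penalty is $-\delta$). One remark: the quasi-concavity step you flag as ``the hard part'' is actually immediate, since $\mathbf{1}\{t>0\}$ is a concave function on $[0,\infty)$, so each summand $v_{i,j}\mathbf{1}\{x_{i,j}>0\}-x_{i,j}$ (and a fortiori each term with $X_j^{-i}>0$) is concave in $x_{i,j}$, making $u_i(\cdot,x_{-i})$ concave rather than merely quasi-concave.
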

\begin{proof}
The proof is similar to that of Theorem \ref{thm:asymmetric_continuous_existence}. The strategy spaces $S_i$ are non-empty, compact,
and convex. The utility function of each player $i$ is quasi-concave in $x_i$. The discontinuities occur at the points in the set
\[
\mathcal{D} = \{ x \in S | \exists j \in \{1, \ldots, m\} \; \mbox{such that} \; x_{i,j}=0, \forall i \in N\}
\]
The proof is very similar to that of Theorem \ref{thm:asymmetric_continuous_existence}. That is, 
it can be verified that for any $(x^{*}, u^{*})$ in the closure of the graph of the vector payoff function, where $x^{*} \in \mathcal{D}$, 
there exists a player $i$ and a strategy $x_{i}^{'}$ such that 
$u_i(x_{i}^{'}, y_{-i}) > u_{i}^{*}$ for all $y_{-i} \in B(x_{-i}^{*};\varepsilon)$. The conditions of Theorem \ref{thm:Reny} are met and the game has a pure Nash equilibrium.
\end{proof}

\section{Acknowledgements}
We thank S\o ren Frederiksen and Alejandro Lopez-Ortiz for useful discussion. This work was supported by the Sino-Danish Center for the Theory of Interactive Computation, funded by the Danish National
Research Foundation and the National Science Foundation of China (under the grant 61061130540). The authors also acknowledge
support from the Center for research in the Foundations of Electronic Markets (CFEM), supported by the Danish Strategic Research Council.

\appendix

\section{Appendix}

\setcounter{lemma}{0}
\begin{lemma} \label{lem:convex}
Let $f : S^{m-1} \rightarrow \mathbb{R}$, where 
$S^{m-1} = \{ y \in \mathbb{R}^{m} | y_i \geq 0 \; \mbox{and} \; y_1 + \ldots + y_m = W\}$. Define 
% be such that for all $y = (y_1, \ldots, y_m) \in \Delta^{m-1}$, 
$f(y) = \sum_{j=1}^{m} \frac{b_j}{a_j + y_j}$, where $a_j, b_j > 0, \forall j \in \{1, \ldots, m\}$.
Then $f$ is strictly convex.
\end{lemma}
\begin{proof}
The domain of $f$ is convex, thus it is sufficient to verify that for all $y \neq z \in S^{m-1}$ and $\lambda \in (0,1)$, $f(\lambda y + (1 - \lambda) z) < \lambda f(y) + (1-\lambda) f(z)$.
For each $j \in \{1, \ldots, m\}$, let $f_j: \mathbb{R}^{+} \leftarrow \mathbb{R}$, $f_j(x) = \frac{b_j}{a_j + y_j}$. Then $f_{j}^{''} = \frac{2 b_j}{(a_j + y_j)^{3}} > 0$, and $f_j$ is strictly convex. Thus
$\frac{a_j \cdot b_j}{a_j + \lambda y_j + (1 - \lambda) z_j} \leq \lambda \left(\frac{a_j \cdot b_j}{a_j + y_j}\right) + (1 - \lambda) \left(\frac{a_j \cdot b_j}{a_j + z_j}\right)$ $(^*)$
with equality if and only if $y_j = z_j$. By summing $(^*)$ over all $j$ and noting that at least one inequality is strict (since $y \neq z$), we obtain that $f$ is strictly convex.
\end{proof}

\begin{comment}
\begin{lemma} \label{lem:quasi_concave}
Let $f : S^{m-1} \rightarrow \mathbb{R}$, where 
$S^{m-1} = \{ y \in \mathbb{R}^{m} | y_i \geq 0 \; \mbox{and} \; y_1 + \ldots + y_m = W\}$.
For each $j \in \{1, \ldots, m\}$, let $a_j, b_j \geq 0$ and define
$f_j : \mathbb{R}^{+} \rightarrow \mathbb{R}$, 
$f_j(x) =  \frac{b_j \cdot x}{a_j + x}$ if $x > 0$ and $f_j(x) = 0$ otherwise.
Let $f(y) = \sum_{j=1}^{m} f_j(y_j)$. Then $f$ is quasi-concave.
\end{lemma}
\begin{proof}
For each $y \neq z \in S^{m}$, the inequality
\begin{equation} \label{eq:quasi_concave_f_j}
f_j(\lambda y_j + (1 - \lambda) z_j) \geq \min(f_j(y_j), f_j(z_j))
\end{equation}
holds if and only if
\begin{itemize}
\item $f_j(0) \geq f_j(0)$ when $y_j = z_j = 0$
\item $f_j(\lambda y_j) \geq \min(f_j(y_j), f(0)) = 0$ when $y_j > 0$ and $z_j = 0$
\item $f_j(\lambda y_j + (1 - \lambda) z_j) \geq \min(f_j(y_j), f_j(z_j))$ when $y_j, z_j > 0$. It can be easily verified that
\begin{small}
\begin{eqnarray*}
f_j(\lambda y_j + (1 - \lambda) z_j) &=& \frac{b_j (\lambda y_j + (1- \lambda) z_j)}{a_j + \lambda y_j + (1 - \lambda) z_j} \geq 
\lambda \left(\frac{b_j \cdot y_j}{a_j + y_j}\right) + (1 - \lambda) \left(\frac{b_j \cdot z_j}{a_j + z_j}\right)  \\
&=& \lambda f_j(y_j) + (1 - \lambda) f_j(z_j) \geq \min(f_j(y_j), f_j(z_j))
\end{eqnarray*}
\end{small}
\end{itemize}
Thus Inequality (\ref{eq:quasi_concave_f_j}) holds. By summing over all $j \in \{1, \ldots, m\}$, it follows that $f(\lambda y + (1 - \lambda)z) \geq \min(f(y), f(z))$, and so $f$ is quasi-concave.
\end{proof}
\end{comment}

\end{document}